\documentclass[journal,comsoc]{IEEEtran}
\usepackage{caption}
\usepackage{subcaption}
\usepackage{lipsum}

%

\usepackage{color}


\usepackage{graphics}
\usepackage{amsmath,amsfonts, amssymb}
\usepackage{mathtools}
\usepackage{amsthm}
\usepackage{bbm}
\usepackage{algorithm}
\usepackage[noend]{algpseudocode}
\usepackage{dsfont}
\graphicspath{{./}}
\newtheorem*{remark}{Remark}



%
\ifCLASSINFOpdf
\else
\fi

\begin{document}

\title{Channel Sensing and Communication over a Time-Correlated Channel with an Energy Harvesting Transmitter}

\author{Mehdi~Salehi~Heydar~Abad,
        Ozgur~Ercetin,
        Deniz~G\"und\"uz 
\thanks{ M.S.H.~Abad  and O.~Ercetin are with the Faculty of Engineering and Natural Sciences, Sabanci University, 34956 Istanbul, Turkey.}
\thanks{ D.~G\"und\"uz is with the Department of Electrical and Electronic Engineering, Imperial College London, UK.}
\thanks{ This work was in part supported by EC H2020-MSCA-RISE-2015 programme under grant number 690893, by Tubitak under grant number 114E955 and by British Council Institutional Links Program under grant number 173605884, and by the European Research Council project BEACON under part number 677854.}
}


\maketitle
\newtheorem{theorem}{Theorem}
\newtheorem{lemma}{Lemma}
\newtheorem{corollary}{Corollary}
 
\begin{abstract}

An energy harvesting (EH) transmitter communicating over a time-correlated wireless channel is considered. The transmitter is capable of sensing the current channel state, albeit at the cost of both energy and transmission time. The EH transmitter aims to maximize its long-term throughput by choosing one of the following actions: $i)$ defer its transmission to save energy for future use, $ii)$ transmit reliably at a low rate, $iii)$ transmit at a high rate, and $iv)$ sense the channel to reveal the channel state at a cost of energy and transmission time, and then decide to defer or to transmit. The problem is formulated as a partially observable Markov decision process with a belief on the channel state. The optimal policy is shown to exhibit a threshold behavior on the belief state, with battery-dependent threshold values. The optimal threshold values and performance are characterized numerically via the value iteration algorithm as well as a policy search algorithm that exploits the threshold structure of the optimal policy. Our results demonstrate that, despite the associated time and energy cost, sensing the channel intelligently to track the channel state improves the achievable long-term throughput significantly as compared to the performance of those protocols lacking this ability as well as the one that always senses the channel.

\end{abstract}
\begin{IEEEkeywords}
Channel sensing, energy harvesting, Gilbert-Elliot channel, Markov decision process.
\end{IEEEkeywords}

\IEEEpeerreviewmaketitle

\section{Introduction}
Due to the tremendous increase in the number of battery-powered wireless communication devices over the past decade, replenishing the batteries of these devices by harvesting energy from natural resources has become an important research area \cite{Paradiso}. Regardless of the type of energy harvesting (EH) device and the energy source employed, a main concern for such communication systems is the stochastic nature of the EH process \cite{EHcont,EHdos,deviller,yang}. To model the uncertainty in the EH process, we consider a  discrete-time system model in which random amount of energy is harvested by the transmitter at each time slot with independent and identically distributed (i.i.d.) values over time\footnote{Typically, the EH process is neither memoryless nor discrete, and the energy is accumulated continuously over time.  However, in order to develop the analytical model underlying this paper, we follow the common assumption in the literature \cite{EHcont,discreteEH}, and assume that the continuous energy arrival is accumulated in an intermediate energy storage device to form energy quantas.}. We assume that the harvested energy is stored in a finite capacity rechargeable battery.

The communication takes place over a  time-varying wireless channel with memory. The channel memory is modeled as a finite state Markov chain \cite{markov}, such that the channel state in the next time slot depends only on the current state. A convenient and often-employed simplification is a two-state Markov chain, known as the Gilbert-Elliot channel \cite{gilbert}. This model assumes that the channel can be either in a \emph{GOOD} or a \emph{BAD} state. We assume that, by spending a certain amount of energy from its battery in a GOOD state, the transmitter can transmit $R_2$ bits of information within a time slot, while in a BAD state, it can only transmit $R_1$ bits, where $R_1<R_2$.


In this work, differently from most of the literature on EH systems, we take into account the energy cost of acquiring channel state information (CSI). At the beginning of each time slot, without the current CSI, EH transmitter takes one of the following actions: $i)$ defer the transmission to save its energy for future use; $ii)$ transmit at a low rate of $R_1$ bits while guaranteeing successful delivery; $iii)$ transmit at a high rate of $R_2$ bits and risk an unsuccessful transmission if the channel is in a BAD state, and $iv)$ sense the channel state, with some time and energy cost, and then decide either to defer or transmit at a rate according to the revealed channel state. Our objective is to maximize the expected discounted sum of bits transmitted over an infinite time horizon.


\subsection{Related Work}
\label{sec:RelatedWork}
Markov decision process (MDP) tools have been extensively utilized in the literature to model communication systems with EH devices. In \cite{MDP1}, the authors propose a simple single-threshold policy for a solar-powered sensor operating over a fading wireless channel. The optimality of a single-threshold policy is proven in \cite{zorzi} when an EH transmitter sends packets with varying importance. The allocation of energy for collecting and transmitting data in an EH communication system is studied in \cite{MDP2} and \cite{MDP3}. The scheduling of EH transmitters with time-correlated energy arrivals  to optimize the long term sum throughput is investigated in \cite{deniz}. Finite time horizon throughput optimization is addressed in \cite{MDP4}, when either the current or future energy and channel states are known by the transmitter. In \cite{MDP5}, power allocation to maximize the throughput is studied when the amount of harvested energy and  channel states are modeled as Markov and static processes, respectively. In \cite{can}, an energy management scheme for sensor nodes with limited energy being replenished at a variable rate is developed to make the probability of complete depletion of the battery arbitrarily small, which at the same time asymptotically maximizes a utility function (e.g., Gaussian channel capacity) that depends on the energy consumption scheme. In \cite{shaviv} a simple online power allocation scheme is proposed for communication over a quasi-static fading channel with an i.i.d. energy arrival process, and it is shown to achieve  the optimal long-term average throughput within a constant gap. Finally, in \cite{mehdi}, a threshold-based transmission scheme over a multiple access channel with no feedback is
investigated when the EH processes are spatially correlated.


Gilbert-Elliott channel model has been previously investigated in the context of scheduling an EH transmitter in \cite{efremidus}, where the transmitter always has perfect CSI, obtained by sensing at every time slot. The transmitter makes a decision to defer or to transmit based on the current CSI and the battery state. Similarly, without considering the channel sensing capability, \cite{Aprem} addresses the problem of optimal power management for an EH sensor over a multi-state wireless channel with memory. Unlike previous work, we take into account the energy cost of channel sensing which can be significant for a low-power EH transmitter. Therefore, in order to minimize the energy consumed for channel sensing, an EH transmitter does not necessarily sense the channel at every time slot, and instead, it  keeps an updated belief of the channel state according to its past observations, and only occasionally senses the current channel state.  

Channel sensing is an essential part of opportunistic and cognitive spectrum access. In \cite{sense1}, the authors investigate the problem of optimal access to a Gilbert-Elliot channel, wherein an energy-unlimited transmitter senses the channel at every time slot. In \cite{gilbertmakale} channel sensing is done only occasionally. The transmitter can decide to transmit at a high or a low rate without sensing the channel, or it can first sense the channel and transmit at a reduced rate due to the time spent for sensing. However, the energy cost of sensing is ignored in \cite{gilbertmakale}. Energy cost of channel sensing has been previously studied in \cite{denizgunduz} for a multiple-input single-output fading channel without memory when both the transmitter and the receiver harvest energy.


\begin{figure}[t]
  \centering
    \includegraphics[scale=.8]{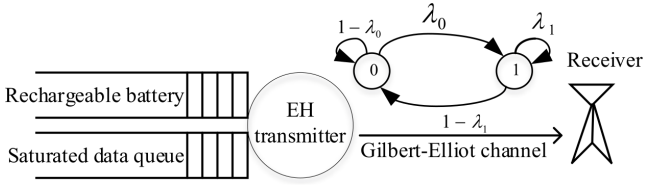}
		  \caption{System model.}
			\label{EHmodel}
\end{figure}

\section{System Model}
\label{sec:SystemModel}
\subsection{Channel and energy harvesting models}
\label{sec:ChannelModelAndEnergyHarvestingAssumptions}

We consider the communication system illustrated in Fig. \ref{EHmodel}, where an EH transmitter communicates over a slotted Gilbert-Elliot channel. Let $G_t$ denote the state of the channel at time slot $t$, which is modeled as a one-dimensional Markov chain with two states: GOOD state denoted by $1$, and BAD state denoted by $0$. Channel transitions occur at the beginning of each time slot. The transition probabilities are given by $\mathds{Pr}\left[G_t=1|G_{t-1}=1\right]=\lambda_1$ and $\mathds{Pr}\left[G_t=1|G_{t-1}=0\right]=\lambda_0$. 
We consider a simple constant-power transmitter which can employ error correcting codes at two different rates, each designed to achieve (almost) reliable transmission at one of the channel states. Accordingly, the transmitter is able to transmit $R_2$ bits per time slot if $G_t=1$, and $R_1<R_2$ bits if $G_t=0$. We normalize the slot duration to one unit; and hence, $R_1$ and $R_2$ refer to both the transmission rate and the number of transmitted bits in a time slot. We assume that the transmitter has an infinitely backlogged data queue, and thus, it always has data to transmit.

We consider an energy \emph{quanta}, representing the smallest energy unit, and assume that the energy arrivals and expenditures, both for transmission and channel sensing, are always integer multiples of this energy unit. At the end of time slot $t$, $E_t$ units of energy arrive according to an i.i.d. random process\footnote{There is an enormous body of the literature (see, for example, \cite{efremidus}, \cite{ind}, and references therein) which assumes i.i.d. EH processes. Nevertheless, results presented in this work can be extended to time-correlated EH processes by incorporating the EH process state into the state of the system. We restrict our attention to i.i.d. EH processes for the clarity of the exposition.}, where $E_t\in\left\{0,1,\ldots,M-1\right\}$ and $\mathds{P}\left[E_t=m\right]=q_m$ for all $t$. The transmitter stores the energy packets in a battery with a capacity of $B_{max}$ units of energy. We denote the state of the battery, i.e., the energy available in the battery at the beginning of time slot $t$, by $B_t$.

\subsection{Transmission protocol}
\label{sec:TransmissionProtocol}
Once a transmission occurs, the receiver replies with an acknowledgment (ACK) if the transmission is successful, or with a negative acknowledgment (NACK) if the transmission fails. Note that, after a transmission at rate $R_2$ an ACK message informs the transmitter that the most recent state of the channel was GOOD, whereas a NACK message informs otherwise. No such information is acquired following a transmission at rate $R_1$, which is successful independent of the channel state.

At the beginning of each time slot, the transmitter takes one of the following actions: $i)$ defer transmission, $ii)$ transmit at rate $R_1$, $iii)$ transmit at rate $R_2$, and $iv)$ sense the channel and transmit or defer, based on the channel state. 

$i)$ \emph{Defer transmission ($D$):} 
The transmitter remains idle, saving its energy to avoid future energy outages. If this action is chosen, there is no message exchange between the transmitter and the receiver. Hence, the transmitter does not obtain the current CSI\footnote{The scenario in which the transmitter is informed about the current CSI even when it does not transmit any data packet is equivalent to the system model investigated in \cite{efremidus}.}.

$ii)$ \emph{Transmit at rate $R_1$ ($L$):} The transmitter transmits at rate  $R_1$ without sensing the channel. If this action is chosen, the transmitter uses a high redundancy coding scheme to guarantee the successful delivery of the message. Since the delivery of the information is guaranteed, the receiver always sends an ACK feedback, and thus, the transmitter does not obtain the current CSI.

$iii)$ \emph{Transmit at rate $R_2$ ($H$):} The transmitter transmits at rate $R_2$ without sensing the channel. If the channel is in a GOOD state, the transmission is successful and the receiver sends an ACK. Otherwise, the transmission fails, and the receiver sends a NACK. This feedback allows the transmitter to obtain the CSI for the completed time slot. We assume that the energy cost of both $L$ and $H$ actions is $\mathcal{E}_T\in \mathbb{Z}^+$ units of energy.

$iv)$ \emph{Channel sensing/Defer at BAD state $OD$:} The transmitter decides to sense the channel at the beginning of the time slot. Channel sensing operation is carried out by sending a control/probing packet, to which the receiver responds with a single bit indicating the channel state. We assume that sensing takes $\tau$ portion of a time slot, where $0<\tau<1$, and the transmitter consumes on average the same power as data transmission over the sensing period. Therefore, the energy cost of channel sensing is $\mathcal{E}_S=\tau\mathcal{E}_T$ units of energy, where $\mathcal{E}_S\in  \mathbb{Z}^+$, and $\mathcal{E}_S<\mathcal{E}_T$. After sensing the channel, if the channel is revealed to be in a GOOD state, in the remaining $1-\tau$ portion of the time slot, the transmitter transmits at rate $R_2$ if it has more than $(1-\tau)\mathcal{E}_T$ energy remaining in the battery. A total of $(1-\tau)R_2$ bits can be transmitted by the end of the time slot. If the channel is revealed to be in a BAD state, then the transmitter defers transmission, saving the rest of the energy (i.e., $(1-\tau)\mathcal{E}_T$).

$v)$ \emph{Channel sensing/Transmit at BAD state $OT$:} The transmitter  again senses the channel initially, and transmits at rate $R_2$ if the channel is in a GOOD state. However, if the channel is in a BAD state, it transmits at rate $R_1$ in the remainder of the time slot.

\begin{remark}
Note that, in both actions involving channel sensing ($OD$ and $OT$) the transmitter transmits at rate $R_2$ if the channel is revealed to be in a GOOD state. This follows from the fact that transmitting at rate $R_2$ when the channel is known to be in a GOOD state has the highest reward for the amount of energy used. A more rigorous proof of this argument is provided in Appendix \ref{reviewerblahblah}.
\end{remark}

Thanks to the channel sensing capability, the transmitter can adapt its behavior to the current channel state. As we show in this paper, this proves to be an important capability to improve the efficiency in EH networks with scarce energy sources.


\section{POMDP Formulation}
\label{sec:MDPFormulation}

At the beginning of each time slot, the transmitter chooses an action from the action set $\mathcal{A}\triangleq\left\{D,L,OD,OT,H\right\}$, based on the state of its battery and its belief about the channel state to maximize a long-term discounted reward to be defined shortly. Although the transmitter is perfectly aware of its battery state, it does not know the current channel state. Hence, the problem can be formulated as a POMDP.

Let the state of the system at time $t$ be denoted by $S_{t}= (B_{t},X_{t})$, where $X_t$ denotes the $\mathit{belief}$ of the transmitter at time slot $t$ about the channel state. The belief $X_{t}$, is the conditional probability that the channel is in a GOOD state at the beginning of the current slot, given the history $\mathcal{H}_{t}$, i.e., $X_{t}=\mathds{Pr}\left[G_t=1|\mathcal{H}_{t}\right]$, where $\mathcal{H}_{t}$ represents all past actions and observations of the transmitter up to, but not including, slot $t$. The belief of the transmitter constitutes a sufficient statistic to characterize its optimal actions \cite{lovejoy}. Note that with this definition of a state, the POMDP problem is converted into an MDP with an uncountable state space $\left\{0,1,2,\ldots,B_{max}\right\}\times\left[0,\ 1\right]$.


A transmission policy $\pi$ describes a set of rules that dictate which action to take at each slot  depending on the history. Let $V^{\pi}(b,p)$ be the expected infinite-horizon discounted reward with initial state $S_0=(b,\ \mathds{Pr}\left[G_0=1|\mathcal{H}_{0}\right]=p)$ under policy $\pi$ with discount factor $\beta\in[0,\ 1)$. The use of the expected discounted reward allows us to obtain a tractable solution, and one can gain insights into the optimal policy for the average reward when $\beta$ is close to 1. $\beta$ can be interpreted as the probability that the transmitter is allowed to use the channel, or  the probability of the transmitter to remain active at each time slot as in \cite{learningdeniz}. For an initial belief $p$, the expected discounted reward has the following expression
\begin{align}
V^{\pi}(b,\ p) = \mathds{E}\left[\sum^{\infty}_{t=0}\beta^{t}R(S_{t},A_{t})|S_{0}=(b,\ p)\right],\label{Vdef}
\end{align}
where $t$ is the time index, $A_{t}\in\mathcal{A}$ is the action chosen at time $t$, and $R(S_{t},A_{t})$ is the expected reward acquired when action $A_t$ is taken at state $S_{t}$. The expectation in (\ref{Vdef}) is over the state sequence distribution induced by the given transmission policy $\pi$. The expected reward when action $A_t$ is chosen at state $S_t$ is given as follows:

\begin{align}
R(S_{t},A_{t}) = \left\{
\begin{array}{ll}
X_{t}R_2, &  A_t = H,\ B_t\geq \mathcal{E}_T,\\
R_1, &   A_t = L,\ B_t\geq \mathcal{E}_T,\\
(1-\tau)X_tR_2, &   A_t = OD,\ B_t\geq \mathcal{E}_T,\\
(1-\tau)[(1-X_t)R_1\\
\hspace{1cm}+X_tR_2 ], & A_t = OT,\ B_t\geq \mathcal{E}_T,\\
0, & \text{otherwise}.
\end{array} \right.\label{eq:Ri}
\end{align}

Since $\mathcal{E}_T$ energy units is required for transmission (with or without channel sensing), if the battery state is below $\mathcal{E}_T$, the reward becomes zero. Hence, in (\ref{eq:Ri}) we only consider actions taken when the battery state is at least $\mathcal{E}_T$. If the action of transmitting at rate $R_2$ without sensing is chosen, $R_2$ bits are transmitted successfully if the channel is in a GOOD state, and $0$ bits otherwise. Since the belief, $X_t$, represents the probability of the channel being in a GOOD state, the expected reward is given by $X_tR_2$. It is guaranteed that transmitting at low rate is always successful, so the expected reward for this action is $R_1$. If the action of channel sensing is chosen, $\mathcal{E}_S=\tau\mathcal{E}_T$  energy units is spent sensing the channel with the remaining $(1-\tau)\mathcal{E}_T$  energy units either being used for transmission, or saved in the battery. If the channel is in a GOOD state, $(1-\tau)R_2$ bits are transmitted successfully.  If the channel is in a BAD state, the transmitter either remains silent and receives no rewards, or utilizes  $(1-\tau)\mathcal{E}_T$ energy units and transmits $(1-\tau)R_1$ bits in the rest of the time slot. Thus, the expected reward of action $OD$ is $(1-\tau)X_tR_2$, while the expected reward of $OT$ is $(1-\tau)[(1-X_t)R_1+X_tR_2 ]$. Finally, if the action of deferring ($D$) is taken, the transmitter neither senses the channel nor transmits data, so the reward is zero.

Define the value function $V(b,\ p)$ as
\begin{align}
V(b,\ p) &= \max_{\pi}V^{\pi}(b,\ p),\ \forall b\in\left\{0,1,\ldots,B_{max}\right\},\ \forall p\in\left[0,\ 1\right]\label{Vmax}.
\end{align}
The optimal infinite-horizon expected reward can be achieved by a stationary policy, i.e., there exists a stationary policy $\pi^*$, mapping the state space $\left\{0,1,\ldots,B_{max}\right\}\times\left[0,\ 1\right]$ into the action space $\mathcal{A}$, such that $V(b,\ p) = V^{\pi^*}(b,\ p)$ \cite{puterman}. The value function $V(b,\ p)$ satisfies the Bellman equation
\begin{align}
V(b,\ p) = \max_{A\in\left\{D,L,OD,OT,H\right\}}\left\{V_{A}(b,\ p)\right\},
\end{align}
where $V_{A}(b,\ p)$ is the action-value function, defined as the expected infinite-horizon discounted reward acquired by taking action $A$ in state $(b,\ p)$, and is given by
\begin{align}
V_{A}(b,\ p)=&R((b,\ p),A)\nonumber\\
&+\beta\mathds{E}_{(\acute{b},\ \acute{p})}\left[V(\acute{b},\ \acute{p})|S_{0}=(b,\ p),A_0=A\right], \label{actionvalue}
\end{align}
where $(\acute{b},\ \acute{p})$ denotes the next state when action $A$ is taken at state $S_0=(b,\ p)$. The expectation in (\ref{actionvalue}) is over the distribution of next states. Below, we evaluate the action-value function $V_{A}(b,\ p)$, and how the system state evolves for each action.


\emph{Defer transmission ($D$):}
Since there is no transmission, there is no feedback; and thus, the transmitter does not learn the  the channel state. Therefore, the belief is updated as the probability of finding the channel in a GOOD state given the current belief state. If $X_t=p$ at time slot $t$, after taking action $D$, belief is updated as
\begin{align}
J(p) = \lambda_{0}(1-p)+\lambda_{1}p. \label{eq:belief}
\end{align}
After taking action $D$, the value function evolves as:
\begin{align}
V_D(b,\ p) &= \beta\sum^{M-1}_{m=0}q_m V(\min(b+m,B_{max}),J(p)).
\end{align}
Note that the term $\min(b+m,B_{max})$ is used to ensure that the battery state does not exceed the battery capacity, $B_{max}$.

\emph{Transmit at rate $R_1$ ($L$):}
This action can be taken only if\footnote{Note that in the generic MDP formulation, we have the same set of actions in every state. We can re-define the reward function by assigning $-\infty$ reward to those actions that are not possible to be taken in specific states to account for this. For the ease of exposition, we chose to present the formulation in this manner.} $b\geq \mathcal{E}_T$. The transmission will be successful independent of the channel state. Hence, the ACK feedback from the receiver does not inform the transmitter about the channel state. Similarly to action $D$, the  belief state is updated using (\ref{eq:belief}), and the value function is given by:

\begin{align}
V_L(b,\ p) &= R_1 + \beta\sum^{M-1}_{m=0}q_m V(\min(b+m-\mathcal{E}_T,B_{max}),J(p)).
\end{align}

\emph{Transmit at rate $R_2$ ($H$):}
This action can only be chosen if $b\geq \mathcal{E}_T$. If the channel is in  GOOD state, $R_2$ bits are successfully delivered to the receiver, the receiver sends back an ACK, and the belief for the next time slot is updated as $\lambda_1$. Otherwise, the transmission fails, the receiver sends a NACK, and the belief is updated as $\lambda_0$. Hence, the value function evolves as:

\begin{align}
V_H(b,\ p) &= p\left[R_2 + \beta\sum^{M-1}_{m=0}q_m V(\min(b+m-\mathcal{E}_T,B_{max}),\lambda_1) \right]\nonumber\\
&+(1-p)\left[ \beta\sum^{M-1}_{m=0}q_m V(\min(b+m-\mathcal{E}_T,B_{max}),\lambda_0) \right].
\end{align}

\emph{Channel sensing/ Defer in BAD state ($OD$):}
If $b\geq \mathcal{E}_T$ and the transmitter decides to sense the channel, it consumes $\mathcal{E}_S=\tau\mathcal{E}_T$ units of energy to sense the current channel state. If the channel is found to be in a GOOD state, $(1-\tau)\mathcal{E}_T$ units of energy is used to transmit $(1-\tau)R_2$ bits, and the belief state is updated as $\lambda_1$. Note that the transmitter always transmits if the channel is in a GOOD state, because this is the best state possible and saving energy for future cannot improve the reward. We refer the interested readers to Appendix \ref{reviewerblahblah} for a rigorous proof of this claim. In action $OD$, transmission is deferred if the channel is in a BAD state, and the transmitter saves $(1-\tau)\mathcal{E}_T$ units of energy for possible future transmissions. The belief is updated as $\lambda_0$ for the next time slot. The action-value function for action $OD$ is given by:
\begin{align}
&V_{OD}(b,\ p)\nonumber\\
&= p\left[(1-\tau)R_2 + \beta\sum^{M-1}_{m=0}q_m V(\min(b+m-\mathcal{E}_T,B_{max}),\lambda_1) \right]\nonumber\\
&+(1-p)\left[ \beta\sum^{M-1}_{m=0}q_m V(\min(b+m-\tau\mathcal{E}_T,B_{max}),\lambda_0) \right].
\end{align}

Meanwhile, if $\tau\mathcal{E}_T\leq b<\mathcal{E}_T$, transmission is not possible since it requires at least $\mathcal{E}_T$ units of energy. However, it is still possible to sense the channel, since it only requires $\tau\mathcal{E}_T$ units of energy. This case may arise when the transmitter believes that learning the channel state may help its decision in the future. Thus, for $\tau\mathcal{E}_T\leq b<\mathcal{E}_T$, the action-value function for action $OD$ is given by:
\begin{align}
V_{OD}(b,\ p)&= p \beta\sum^{M-1}_{m=0}q_m V(\min(b+m-\tau\mathcal{E}_T,B_{max}),\lambda_1)\nonumber\\
&+(1-p) \beta\sum^{M-1}_{m=0}q_m V(\min(b+m-\tau\mathcal{E}_T,B_{max}),\lambda_0). \label{ODOT}
\end{align}

\emph{Channel sensing/Transmit at BAD state ($OT$):} The transmitter senses the channel, and transmits no matter what the channel state is. It transmits $(1-\tau)R_2$ bits if it is in a GOOD state, and $(1-\tau)R_1$ bits in a BAD state. The belief is updated as $\lambda_1$ ($\lambda_0$) if the channel is in a GOOD (BAD) state. The action-value function is given by:

\begin{align}
&V_{OT}(b,\ p)\nonumber\\
&= p\left[(1-\tau)R_2 + \beta\sum^{M-1}_{m=0}q_m V(\min(b+m-\mathcal{E}_T,B_{max}),\lambda_1) \right]\nonumber\\
&+(1-p)\bigg[(1-\tau)R_1 +\beta\sum^{M-1}_{m=0}q_m V(\min(b+m-\mathcal{E}_T,B_{max}),\lambda_0) \bigg].
\end{align}

Based on the action-value functions presented above, the evolution of the battery state is as follows:
\begin{align}
B_{t+1} = \left\{
\begin{array}{ll}
\min(B_t+E_t,B_{max}), &A_t = D,\\
\min(B_t+E_t-\mathcal{E}_T,B_{max}),\hspace{-.2cm} &A_t \in \left\{L,H,OT\right\},B_t\geq \mathcal{E}_T,\\
\min(B_t+E_t-\tau\mathcal{E}_T\\
\hspace{0.8cm}-(1-\tau)\mathcal{E}_T G_t,B_{max}), &A_t = OD,B_t\geq \mathcal{E}_T\\
\min(B_t+E_t-\tau\mathcal{E}_T,B_{max}), &A_t = OD, \tau\mathcal{E}_T\leq b<\mathcal{E}_T.
\end{array} \right.\label{eq:Bt2}
\end{align}

\section{The Structure of The Optimal Policy}
\label{sec:StructureOfTheOptimalPolicy}

\subsection{General Case}
In this section, we show that the optimal policy has a threshold-type structure on the belief state. The belief state set, i.e., the interval $[0,\ 1]$, can be divided into mutually exclusive subsets where each subset is assigned to a distinct action. We begin to establish our main results by proving the convexity of the value function $V(b,p)$, with respect to $p$. 
\begin{lemma}
\label{thm:convex_general}
For any given $b\geq 0$, V(b,\ p) is convex in $p$.
\end{lemma}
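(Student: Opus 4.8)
The plan is to prove convexity of $V(b,p)$ in $p$ via the standard value-iteration argument: define the sequence $V^{(n)}(b,p)$ by $V^{(0)}\equiv 0$ and $V^{(n+1)} = \mathcal{T}V^{(n)}$, where $\mathcal{T}$ is the Bellman operator. Since $\mathcal{T}$ is a contraction (discount factor $\beta < 1$), $V^{(n)} \to V$ uniformly, and convexity is preserved under pointwise limits, so it suffices to show by induction on $n$ that $p \mapsto V^{(n)}(b,p)$ is convex for every fixed $b$. The base case is trivial since $V^{(0)}\equiv 0$. For the inductive step, I would show that each action-value function $V_A^{(n+1)}(b,p)$ is convex in $p$, and then use the fact that the pointwise maximum of convex functions is convex to conclude that $V^{(n+1)}(b,p) = \max_A V_A^{(n+1)}(b,p)$ is convex.

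The key observation for the inductive step is how the belief evolves under each action. For actions $H$, $OD$, and $OT$, the feedback (ACK/NACK or the sensing result) reveals the channel state, so the next belief is the \emph{constant} $\lambda_1$ (with probability $p$) or $\lambda_0$ (with probability $1-p$), independent of $p$ otherwise; hence $V_A^{(n+1)}(b,p)$ is an \emph{affine} function of $p$ (a reward term linear in $p$ plus $p\cdot(\text{const}) + (1-p)\cdot(\text{const})$), and affine functions are convex. For actions $D$ and $L$, no information is revealed and the next belief is $J(p) = \lambda_0(1-p) + \lambda_1 p$, which is affine in $p$; thus $V_D^{(n+1)}(b,p) = \beta\sum_m q_m V^{(n)}(\cdot, J(p))$ is a nonnegative combination of convex functions composed with an affine map, hence convex in $p$ (and $V_L^{(n+1)}$ adds only the constant $R_1$). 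So every $V_A^{(n+1)}(\cdot,p)$ is convex in $p$, completing the induction.

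The main technical point to be careful about — the step I expect to require the most attention — is justifying that convexity passes to the limit and that the limit of the value iteration is indeed the value function $V$ satisfying the Bellman equation. This rests on the contraction property of $\mathcal{T}$ on the space of bounded functions (rewards are bounded since $R_1, R_2$ are finite and $\beta < 1$, so $V$ is bounded by $R_2/(1-\beta)$), which gives uniform convergence $V^{(n)} \to V$; pointwise limits of convex functions are convex, so $V(b,\cdot)$ is convex. One should also note that the battery argument is discrete and plays no role in the convexity argument — the $\min(b+m-\cdots, B_{max})$ terms only shuffle which discrete battery level appears, and we are only claiming convexity in the continuous variable $p$ for each fixed resulting battery level, so the induction hypothesis applies termwise. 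A minor bookkeeping item is the secondary case of $OD$ when $\tau\mathcal{E}_T \le b < \mathcal{E}_T$ in \eqref{ODOT}, which is again manifestly affine in $p$ and causes no difficulty.
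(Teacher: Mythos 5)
Your proposal is correct and follows essentially the same route as the paper: induction on the value-iteration (finite-horizon) iterates, using that $V_{H}$, $V_{OD}$, $V_{OT}$ are affine in $p$, that $V_{D}$ and $V_{L}$ are convex compositions through the affine belief update $J(p)$, that the pointwise maximum of convex functions is convex, and that convexity survives the limit $n\to\infty$. The only (harmless) difference is that you start from $V^{(0)}\equiv 0$ and explicitly invoke the contraction property to justify convergence, whereas the paper starts the induction at the one-stage value function and asserts the limit directly.
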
 
\begin{proof}
The proof is given in Appendix \ref{sec:convex_general}.
\end{proof}
Next, we show that the value function is a non-decreasing function of the battery state, $b$. This lemma provides the intuition why deferring or sensing actions are advantageous in some states. The incentive of taking these actions is that the value function transitions into higher values without consuming any energy, or consuming only $\tau\mathcal{E}_T$ energy units.

\begin{lemma}
\label{thm:nondecreasing-battery}
	 Given an arbitrary belief $0\leq p\leq 1$, $V(b_1,p)\geq V(b_0,p)$ if $b_1>b_0$.
\end{lemma}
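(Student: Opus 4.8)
The plan is to prove the claim by value iteration and then pass to the limit. Define $V_0\equiv 0$ and, for $n\ge 0$, $V_{n+1}(b,p)=\max_{A}\{R((b,p),A)+\beta\,\mathds{E}[V_n(\acute{b},\acute{p})]\}$, where the maximum is over the actions feasible at battery level $b$; write $Q_{n+1}(b,p,A)$ for the bracketed action-value. Since the one-step reward is bounded by $R_2$ and $\beta<1$, the Bellman operator is a contraction and $V_n\to V$ uniformly \cite{puterman}, so it suffices to prove by induction on $n$ that $V_n(\cdot,p)$ is non-decreasing in its first argument for every $p$; the uniform limit then inherits this, and since the battery alphabet is $\{0,1,\dots,B_{max}\}$ the claim follows for every $b_1>b_0$. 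The base case is immediate. For the inductive step I would use three structural facts. First, the set of feasible actions only grows with the battery level, so every action available at $b_0$ is available at any $b_1>b_0$. Second, by the reward table (\ref{eq:Ri}), once an action is feasible its one-step reward $R((b,p),A)$ is independent of $b$. Third, by the battery recursion (\ref{eq:Bt2}), for each action and each fixed realization of the harvested energy (and, for action $OD$, of the channel state), the next battery $\acute{b}$ is a non-decreasing function of $b$ — the $\min(\cdot,B_{max})$ cap preserving the order — while the next belief $\acute{p}$ does not depend on $b$. Combining the second and third facts with the induction hypothesis gives $Q_{n+1}(b_1,p,A)\ge Q_{n+1}(b_0,p,A)$ for every action $A$ feasible at $b_0$, and then the first fact gives $V_{n+1}(b_1,p)\ge V_{n+1}(b_0,p)$ — this settles every case in which $b_0$ and $b_1$ lie in the same feasibility regime (both below $\tau\mathcal{E}_T$, both in $[\tau\mathcal{E}_T,\mathcal{E}_T)$, or both at least $\mathcal{E}_T$), as well as every case in which only the action $D$ is feasible at $b_0$.

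The single case where the third fact fails — and the step I expect to be the main obstacle — is the transition between the ``cannot transmit'' and ``can transmit'' regimes, namely $\tau\mathcal{E}_T\le b_0<\mathcal{E}_T\le b_1$ with the optimal action at $b_0$ being $OD$ in its sense-only form (\ref{ODOT}). There, $OD$ at $b_0$ spends only $\tau\mathcal{E}_T$ and still \emph{defers in the GOOD state}, whereas $OD$ at $b_1$ is forced to transmit upon sensing a GOOD channel and spends $\mathcal{E}_T$, so the post-sensing battery in the GOOD state can be strictly \emph{larger} at $b_0$ than at $b_1$. Using the regime-internal monotonicity just established (applied to $[\tau\mathcal{E}_T,\mathcal{E}_T)$ and to $[\mathcal{E}_T,B_{max}]$), this whole case reduces to the single inequality $V_{n+1}(\mathcal{E}_T,p)\ge V_{n+1}(\mathcal{E}_T-1,p)$, which I would attack by comparing $Q_{n+1}(\mathcal{E}_T-1,p,OD)$ (sense-only form) with $Q_{n+1}(\mathcal{E}_T,p,OD)$ (transmit-on-GOOD form): the BAD-state continuations remain ordered by the third fact and the induction hypothesis, while for the GOOD-state continuations one needs the extra one-step reward $(1-\tau)R_2$ collected at $b_1$ to outweigh $\beta$ times the loss of continuation value caused by the $(1-\tau)\mathcal{E}_T-1$ battery units (note this number is smaller than $\mathcal{E}_T$) that $b_1$ lacks relative to the post-sensing state at $b_0$.

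That last inequality is closed by an auxiliary bound on the marginal value of energy, $V_n(\min(b+k,B_{max}),p)-V_n(b,p)\le\tfrac{k}{\mathcal{E}_T}R_2$, reflecting the fact that in this model a slot produces positive reward only by spending exactly $\mathcal{E}_T$ energy units, so the total (undiscounted) reward attributable to $k$ extra battery units is at most $\tfrac{k}{\mathcal{E}_T}R_2$; with $k=(1-\tau)\mathcal{E}_T-1$ this makes the loss at most $\beta(1-\tau)R_2<(1-\tau)R_2$, which the extra reward covers. I expect proving this auxiliary bound — whether by a parallel induction or by a sample-path/coupling argument that makes ``reward attributable to the extra energy'' rigorous while keeping the two belief processes synchronized — to be the technically delicate part; once it is available, the rest is routine monotone-recursion bookkeeping, and Lemma \ref{thm:convex_general} is not needed for this lemma.
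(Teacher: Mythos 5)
Your top-level plan coincides with the paper's: induct on the finite-horizon value function, argue that each action-value function is non-decreasing in $b$ because the one-step reward is battery-independent and the next battery level is order-preserving in $b$, take the maximum over actions, and pass to the limit. You go further than the paper's own (very terse) inductive step by explicitly isolating the one place where order preservation breaks --- the crossing from $\tau\mathcal{E}_T\le b_0<\mathcal{E}_T$ to $b_1\ge\mathcal{E}_T$, where the sense-only form of $OD$ in (\ref{ODOT}) leaves the GOOD-state continuation with \emph{more} energy than the transmit-on-GOOD form --- and reducing that case to a comparison of the two $OD$ continuations. That identification is correct, and the paper resolves essentially the same tension separately, via Lemma \ref{thm:value_bound} in Appendix \ref{reviewerblahblah}.

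The gap is the auxiliary bound you use to close that case: $V_n(\min(b+k,B_{max}),p)-V_n(b,p)\le\frac{k}{\mathcal{E}_T}R_2$ is false. The value function is not concave in $b$ --- it is a staircase (see Fig.~\ref{valuefunction}) --- so the marginal value of energy cannot be amortized uniformly at $R_2/\mathcal{E}_T$ per unit: the single unit that completes a quantum of $\mathcal{E}_T$ can carry the entire reward. Concretely, take $q_0=1$ (no harvesting) and $p=\lambda_1=1$; then $V(\mathcal{E}_T,1)=R_2$ while $V(\mathcal{E}_T-1,1)=0$, so the $k=1$ increment equals $R_2\gg R_2/\mathcal{E}_T$, and the same example defeats the bound for your $k=(1-\tau)\mathcal{E}_T-1$ whenever $\tau\mathcal{E}_T+1\le b<\mathcal{E}_T\le b+k$. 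Your own remark that a slot produces reward only by spending a full $\mathcal{E}_T$ is precisely why the amortization fails, not why it works. What your reduction actually needs is only the chunk-level inequality $V_n(b+(1-\tau)\mathcal{E}_T,p)-V_n(b,p)\le(1-\tau)R_2$ at the specific post-transition battery levels $b=m$ and belief $\lambda_1$; this is the content of Lemma \ref{thm:value_bound}, which the paper establishes by a direct horizon induction comparing action-value functions pairwise rather than by amortizing reward over energy units (and which itself must be handled jointly with monotonicity, since the two inductions feed each other, and with some care when the base battery is itself below the transmission threshold). Substituting that bound for your per-unit linear bound repairs the argument; the per-unit bound cannot be repaired.
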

\begin{proof}
The proof is given in Appendix \ref{sec:proofofthm2}.
\end{proof}

The next lemma states that the value function is non-decreasing with respect to the belief state as well. 

\begin{lemma}
\label{thm:nondecreasing-belief}
For a given battery state $b\in\left\{0,1,\ldots,B_{max}\right\}$, if $p_{1}>p_{0}$ then $V(b,p_{1})\geq V(b,p_{0})$.
\end{lemma}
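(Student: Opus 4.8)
\emph{Proof proposal.} The plan is to prove the statement for every iterate of the value‑iteration recursion and then pass to the limit. Put $V^{(0)}\equiv 0$ and, for $n\ge 0$, $V^{(n+1)}(b,p)=\max_{A\in\mathcal{A}}V^{(n)}_A(b,p)$, where each $V^{(n)}_A$ is obtained from the action‑value expressions of Section~\ref{sec:MDPFormulation} by substituting $V^{(n)}$ for $V$. Since the Bellman operator is a $\beta$‑contraction, $V^{(n)}\to V$ pointwise, and a pointwise limit of non‑decreasing functions is non‑decreasing; hence it suffices to show by induction on $n$ that $p\mapsto V^{(n)}(b,p)$ is non‑decreasing for every $b$ (the base case is immediate). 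Along the way I also use that $V^{(n)}$ is non‑decreasing in the battery argument, obtained by the same induction exactly as in Lemma~\ref{thm:nondecreasing-battery}.

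For the inductive step, assume $V^{(n)}$ is non‑decreasing in both arguments. As a pointwise maximum of non‑decreasing functions is non‑decreasing, it is enough to inspect each $p\mapsto V^{(n)}_A(b,p)$. Two observations carry most cases, both relying on the natural assumption $\lambda_1\ge\lambda_0$: (i) the belief update $J(p)=\lambda_0+(\lambda_1-\lambda_0)p$ is affine and non‑decreasing, so $p\mapsto V^{(n)}(b',J(p))$ is non‑decreasing, which settles $D$ and $L$ (the latter only adds the constant $R_1$); and (ii) $V^{(n)}(b',\lambda_1)\ge V^{(n)}(b',\lambda_0)$ for every $b'$. For $H$, $OT$, and $OD$ the action‑value function is affine in $p$, so only the sign of its slope matters: the slope of $V^{(n)}_H$ is $R_2+\beta\sum_m q_m\bigl[V^{(n)}(\cdot,\lambda_1)-V^{(n)}(\cdot,\lambda_0)\bigr]\ge R_2\ge 0$; the slope of $V^{(n)}_{OT}$ is $(1-\tau)(R_2-R_1)+\beta\sum_m q_m\bigl[V^{(n)}(\cdot,\lambda_1)-V^{(n)}(\cdot,\lambda_0)\bigr]\ge 0$ since $R_2>R_1$; and for $OD$ with $\tau\mathcal{E}_T\le b<\mathcal{E}_T$, where both branches return to the same battery level, the slope is $\beta\sum_m q_m\bigl[V^{(n)}(\cdot,\lambda_1)-V^{(n)}(\cdot,\lambda_0)\bigr]\ge 0$.

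The one case that needs a separate idea — and what I expect to be the crux — is $OD$ with $b\ge\mathcal{E}_T$: there the GOOD branch spends $\mathcal{E}_T$ while the BAD branch spends only $\tau\mathcal{E}_T$, so the two branches land on different battery levels and the slope, $(1-\tau)R_2+\beta\sum_m q_m\bigl[V^{(n)}(\min(b+m-\mathcal{E}_T,B_{max}),\lambda_1)-V^{(n)}(\min(b+m-\tau\mathcal{E}_T,B_{max}),\lambda_0)\bigr]$, need not be non‑negative. The resolution is that whenever this slope is negative the action $OD$ is in any case dominated by $D$ and may be dropped from the maximization: a negative slope means the ``GOOD‑branch value'' $(1-\tau)R_2+\beta\sum_m q_m V^{(n)}(\min(b+m-\mathcal{E}_T,B_{max}),\lambda_1)$ lies strictly below the ``BAD‑branch value'' $\beta\sum_m q_m V^{(n)}(\min(b+m-\tau\mathcal{E}_T,B_{max}),\lambda_0)$, so $V^{(n)}_{OD}(b,p)$, being a convex combination of the two, is at most the BAD‑branch value; that value is at most $\beta\sum_m q_m V^{(n)}(\min(b+m,B_{max}),\lambda_0)$ by monotonicity in the battery argument, and at most $\beta\sum_m q_m V^{(n)}(\min(b+m,B_{max}),J(p))=V^{(n)}_D(b,p)$ by monotonicity in the belief argument (the inductive hypothesis), since $J(p)\ge\lambda_0$. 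Thus $V^{(n)}_{OD}(b,\cdot)\le V^{(n)}_D(b,\cdot)$ pointwise, so $V^{(n+1)}(b,\cdot)$ equals a maximum of non‑decreasing functions and is non‑decreasing; letting $n\to\infty$ finishes the proof. The only remaining work is routine battery bookkeeping: clipping at $B_{max}$, the separate $b<\mathcal{E}_T$ and $b<\tau\mathcal{E}_T$ regimes, and running the battery‑monotonicity induction in parallel with the belief‑monotonicity one.
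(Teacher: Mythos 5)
Your proposal is correct, and its skeleton --- induction on the horizon of the finite-horizon value function, monotonicity of the affine update $J(p)$ to handle $D$ and $L$, and passage to the limit --- matches the paper's proof. Where you genuinely depart from the paper is in the treatment of the actions that are affine in $p$. The paper disposes of $H$, $OT$, and $OD$ in one stroke via the identity $V_A(b,p_1,n)-V_A(b,p_0,n)=V_A(b,p_1-p_0,n)\geq 0$, which presumes additivity ($V_A(b,p+q,n)=V_A(b,p,n)+V_A(b,q,n)$); that holds only for homogeneous linear functions, not for the affine maps $p\mapsto pG+(1-p)B$ at hand, and in particular it sidesteps the real question of whether the slope $G-B$ is non-negative. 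You instead check the slopes directly, which is immediate for $H$ and $OT$ (same post-action battery level in both branches, so $G-B\geq 0$ follows from $\lambda_1\geq\lambda_0$ and the inductive hypothesis), and you correctly isolate the genuinely delicate case of $OD$ with $b\geq\mathcal{E}_T$, where the two branches land on different battery levels and the slope can in principle be negative. Your resolution --- that a negative slope forces $V_{OD}(b,\cdot)\leq V_D(b,\cdot)$ pointwise (via battery monotonicity from Lemma~\ref{thm:nondecreasing-battery} and belief monotonicity applied to $J(p)\geq\lambda_0$), so $OD$ can be dropped from the maximization without affecting it --- is a clean domination argument that closes a gap the paper's proof leaves open; an alternative closure would invoke the bound $V(b+(1-\tau)\mathcal{E}_T,p)-V(b,p)<(1-\tau)R_2$ that the paper proves later in a different appendix, but your route is self-contained within the induction. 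You are also right to flag $\lambda_1\geq\lambda_0$ as a needed hypothesis: the paper uses it implicitly (the lemma is false without it), and making it explicit is an improvement.
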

\begin{proof}
The proof is given in Appendix \ref{sec:proofofthm3}.
\end{proof}

Lemma \ref{thm:convex_general} is necessary in proving the structure of the optimal policy. For each $b\in\left\{0,1,\ldots,B_{max}\right\}$ and $A\in\mathcal{A}$, we define:
\begin{align}
\Phi^{b}_{A}\triangleq\left\{p\in\left[0,1\right]:V(b,p)=V_{A}(b,p)\right\}.\label{sets}
\end{align}
For any $b\geq 0$, $\Phi^{b}_{A}$ characterizes the set of belief states for which it is optimal to choose action $A$. In Theorem \ref{thm:structure_general}, we show that the optimal policy has a threshold-type structure.
\begin{theorem}
\label{thm:structure_general}
The optimal policy is a threshold-type policy on the belief state $p$, and the thresholds are functions of the battery state, $b$.
\end{theorem}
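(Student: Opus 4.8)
The plan is, for each fixed battery level $b$, to show that the sets $\Phi^{b}_{A}$ of \eqref{sets} are (possibly empty) subintervals of $[0,1]$ that partition $[0,1]$ in a fixed order, the breakpoints being the $b$-dependent thresholds. The key preliminary observation is a dichotomy among the action-value functions of \eqref{actionvalue}: for fixed $b$, $V_{H}(b,\cdot)$, $V_{OD}(b,\cdot)$ and $V_{OT}(b,\cdot)$ are \emph{affine} in $p$, since after any transmission (with or without sensing) the posterior belief collapses to $\lambda_{1}$ or $\lambda_{0}$---neither depending on $p$---so $p$ enters only through the outer Bernoulli mixture; whereas $V_{D}(b,\cdot)$ and $V_{L}(b,\cdot)$ are \emph{convex} but generally not affine, because the posterior after these actions is the affine map $J(p)$ of \eqref{eq:belief} and $V$ is convex in its belief argument by Lemma~\ref{thm:convex_general}. (When $\tau\mathcal{E}_{T}\le b<\mathcal{E}_{T}$ only $D$ and the sense-only $OD$ of \eqref{ODOT} are feasible, and $V_{OD}(b,\cdot)$ is again affine; when $b<\tau\mathcal{E}_{T}$ only $D$ is feasible and the statement is vacuous.)

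For the affine actions $A\in\{H,OD,OT\}$, since $V(b,\cdot)\ge V_{A}(b,\cdot)$ with equality exactly on $\Phi^{b}_{A}$, we have $\Phi^{b}_{A}=\{p:\,V(b,p)-V_{A}(b,p)\le 0\}$, the zero-sublevel set of a convex function (convex minus affine), hence an interval. To order these three regions I compute pairwise differences. Both $H$ and $OT$ spend $\mathcal{E}_{T}$ and lead to the same posteriors, so the continuation terms cancel and $V_{H}(b,p)-V_{OT}(b,p)=\tau R_{2}\,p-(1-\tau)R_{1}(1-p)$, affine and strictly increasing. Similarly $V_{H}(b,p)-V_{OD}(b,p)$ is affine with nonnegative slope, its $(1-p)$-coefficient being $\le 0$ by Lemma~\ref{thm:nondecreasing-battery} (in the $OD$ bad-branch $(1-\tau)\mathcal{E}_{T}$ more energy is retained), and $V_{OT}(b,p)-V_{OD}(b,p)$ equals $(1-p)$ times a $b$-dependent constant, so for each $b$ one of $OD,OT$ weakly dominates the other at every $p$. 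Hence inside $\{H,OD,OT\}$ the optimal action sweeps from a sensing action at low $p$ to $H$ at high $p$ across a single threshold.

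The delicate part is $D$ and $L$, where the convex-sublevel-set shortcut is unavailable, so I argue via monotone pairwise gaps anchored at the endpoints. The gap $\delta_{H}:=V_{H}(b,\cdot)-V_{D}(b,\cdot)$ is concave (affine minus convex) and satisfies $\delta_{H}(1)\ge 0$---when the channel is known \textsc{good}, transmitting at $R_{2}$ is no worse than deferring, which is exactly the claim of Appendix~\ref{reviewerblahblah} (or follows from an exchange argument on the energy-expenditure paths together with Lemma~\ref{thm:nondecreasing-battery}). A concave function nonnegative at $p=1$ has $\{\delta_{H}\ge 0\}$ equal to an interval $[p_{1},1]$, so $\Phi^{b}_{D}\subseteq[0,p_{1}]$; likewise $V_{H}-V_{L}\ge pR_{2}-R_{1}$ (the residual continuation term being $\ge 0$ by convexity of $V$) confines $\Phi^{b}_{L}$ to a left interval. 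Within these left-anchored ranges one then uses the monotonicity/convexity of the remaining pairwise gaps---$V_{L}-V_{D}$, $V_{OT}-V_{L}$, $V_{D}-V_{OD}$---via Lemmas~\ref{thm:nondecreasing-battery} and \ref{thm:nondecreasing-belief} to conclude that $\Phi^{b}_{D}$ and $\Phi^{b}_{L}$ are themselves single intervals preceding the sensing/high-rate regions in the order $D\prec L\prec\{OD\text{ or }OT\}\prec H$; collecting the crossover points over all $b$ gives the ordered, battery-dependent threshold structure.

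The main obstacle is this last step. Because $V_{D}$ and $V_{L}$ are merely convex, interval structure for $\Phi^{b}_{D},\Phi^{b}_{L}$ cannot be read off abstractly, and the gaps involving them---e.g. $V_{L}-V_{D}=R_{1}+\beta\sum_{m}q_{m}\big[V(\min(b+m-\mathcal{E}_{T},B_{max}),J(p))-V(\min(b+m,B_{max}),J(p))\big]$---are differences of convex functions whose sign changes are not automatic. Showing each such gap changes sign exactly once on $[0,1]$---morally, that the marginal value of $\mathcal{E}_{T}$ energy units is monotone in the belief and that the strict dominance of $H$ near $p=1$ pushes $D$ and $L$ to the left---is where the real work lies, and it is there that Lemmas~\ref{thm:convex_general}--\ref{thm:nondecreasing-belief} and the Appendix~\ref{reviewerblahblah} fact all enter.
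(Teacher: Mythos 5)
Your treatment of the actions $H$, $OD$, $OT$ is correct and is essentially the paper's own argument: for these actions $V_A(b,\cdot)$ is affine in $p$ (the posterior collapses to $\lambda_1$ or $\lambda_0$), so $\Phi^b_A=\{p:V(b,p)-V_A(b,p)\le 0\}$ is the zero-sublevel set of a convex function and hence an interval, with $1\in\Phi^b_H$ anchoring the $H$-region to the right; the paper phrases this as a direct convex-combination computation, but it is the same fact, and your pairwise affine comparisons among these three actions are sound.

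The problem is the final step, which you yourself flag as ``where the real work lies'': you set out to show that $\Phi^b_D$ and $\Phi^b_L$ are single intervals arranged in the fixed order $D\prec L\prec\{OD,OT\}\prec H$. That step cannot be completed because the claim is false, and the theorem does not assert it. The paper proves only the weaker statement that the regions of the affine actions are intervals, and explicitly notes after the proof that $\Phi^b_D$ and $\Phi^b_L$ need not be convex and may consist of many disjoint intervals. This is not a technicality: already in the special case $R_1=0$ (Theorem \ref{thm:threshold}) the optimal defer region takes the form $\Phi^b_D=[0,\rho_1(b))\cup(\rho_2(b),\rho_3(b))$, i.e.\ $D$ is optimal both below the sensing interval \emph{and} again between the sensing interval and the $H$ threshold, and Figs.~\ref{5region} and \ref{2rate} exhibit parameter choices where exactly this occurs (three-threshold policies; non-convex $D$ and $L$ regions). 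Your concavity argument for $\delta_H=V_H-V_D$ correctly confines $\Phi^b_D$ to the left of the $H$ threshold, but it does not and cannot rule out $\Phi^b_D$ being disconnected there; likewise the differences $V_L-V_D$ and $V_D-V_{OD}$ are differences of convex functions with no single-crossing guarantee, and the numerics show they genuinely cross more than once. The correct resolution is to stop at the statement the paper actually claims---intervals for $OD$, $OT$, $H$, with $D$ and $L$ occupying the possibly disconnected remainder---rather than pushing for a total ordering of all five regions.
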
 
\begin{proof}
This theorem states that the optimal policy has a threshold structure. Initially, we aim to prove that $\Phi^{b}_{A}$ for $A\in\left\{OD,OT,H\right\}$ is convex. It is easy to see that for $b=0$, $V(b,p)=V_{D}(b,p)$, and hence, $\Phi^{0}_{D}=[0,\ 1]$, and $\Phi^{0}_{L}=\Phi^{0}_{OD}=\Phi^{0}_{OT}=\Phi^{0}_{H}=\varnothing$. First, we consider battery states $\tau\mathcal{E}_T\leq b<\mathcal{E}_T$. We will prove that for any $\tau\mathcal{E}_T\leq b<\mathcal{E}_T$, $\Phi^{b}_{OD}$ is convex. 
Let $p_1,\ p_2\in \Phi^{b}_{OD}$, and $a\in(0,\ 1)$. We have
\begin{align}
V(b,ap_1+(1-a)p_2)&\leq aV(b,p_1)+(1-a)V(b,p_2),\label{ineq1}\\
&=aV_{OD}(b,p_1)+(1-a)V_{OD}(b,p_2),\label{eq1}\\
&=V_{OD}(b,ap_1+(1-a)p_2),\label{eq2}\\
&\leq V(b,ap_1+(1-a)p_2),\label{ineq2}
\end{align}
where \eqref{ineq1} follows from Lemma \ref{thm:convex_general}; \eqref{eq1} is due to the fact that $p_1,\ p_2\in \Phi^{b}_{OD}$; \eqref{eq2} follows from the linearity of $V_{OD}$ in $p$; and \eqref{ineq2} holds due to the definition of $V(b,\ p)$. Consequently, $V(b,ap_1+(1-a)p_2) = V_{OD}(b,ap_1+(1-a)p_2)$, and it follows that $ap_1+(1-a)p_2\in \Phi^{b}_{OD}$, which, in turn, proves the convexity of $\Phi^{b}_{OD}$. Note also that $p=0$ and $p=1$ both belong to $\Phi^{b}_{D}$ for all $0\leq b<\mathcal{E}_T$. Since no transmission is possible for $0\leq b<\mathcal{E}_T$, we have $\Phi^{b}_{L}=\Phi^{b}_{H} = \varnothing$.  Hence, for $0\leq b<\mathcal{E}_T$, either $\Phi^{b}_{OD}=\varnothing$, or there exists $0<\rho_1(b)\leq\rho_2(b)<1$ such that $\Phi^{b}_{OD}=[\rho_1(b),\rho_2(b)]$. Consequently, we have $\Phi^{b}_{D}=[0,\rho_1(b))\cup(\rho_2(b),1]$, if $0\leq b<\mathcal{E}_T$.

Next, consider $\mathcal{E}_T\leq b \leq B_{max}$. We will prove that $\Phi^{b}_{H}$, $\Phi^{b}_{OD}$, and $\Phi^{b}_{OT}$ are convex subsets of the belief state set. Let $p_1,\ p_2\in \Phi^{b}_{H}$  and $a\in(0,\ 1)$. Similar to (\ref{ineq1})-(\ref{ineq2}) we can argue
\begin{align}
V(b,ap_1+(1-a)p_2)&\leq aV(b,p_1)+(1-a)V(b,p_2),\nonumber\\
&=aV_{H}(b,p_1)+(1-a)V_{H}(b,p_2),\nonumber\\
&=V_{H}(b,ap_1+(1-a)p_2),\nonumber\\
&\leq V(b,ap_1+(1-a)p_2).
\end{align}
Consequently, $V(b,ap_1+(1-a)p_2) = V_{H}(b,ap_1+(1-a)p_2)$; and hence, $ap_1+(1-a)p_2\in \Phi^{b}_{H}$, which proves the convexity of $\Phi^{b}_{H}$. Since it is always optimal to transmit at rate $R_2$ if the channel is in a GOOD state (see \cite{efremidus}, and Appendix \ref{reviewerblahblah}) $1\in\Phi^{b}_{H}$, and since the convex subsets of the real line are intervals, there exists $\rho_{N}(b)\in(0,1]$ such that $\Phi^b_{H} = [\rho_{N}(b),1]$. Note that $N$ is the number of thresholds, which depends on the system parameters. Using the same technique we can prove that $\Phi^{b}_{OD}$ and $\Phi^{b}_{OT}$ are both convex, and hence, there exists $0<\rho_{i_1}(b)\leq\rho_{i_2}(b)\leq\rho_{j_1}(b)\leq\rho_{j_2}(b)\leq \rho_{N}(b)\leq 1$, such that $\Phi^b_{OD}=[\rho_{i_1}(b),\rho_{i_2}(b)]$ and $\Phi^b_{OT}=[\rho_{j_1}(b),\rho_{j_2}(b)]$; or $\Phi^b_{OT}=[\rho_{i_1}(b),\rho_{i_2}(b)]$ and $\Phi^b_{OD}=[\rho_{j_1}(b),\rho_{j_2}(b)]$. However, since $V_{A}(b,ap_1+(1-a)p_2)\neq aV_{A}(b,p_1)+(1-a)V_{A}(b,p_2)$ for  $A\in\left\{D,L\right\}$, in general, $\Phi^{b}_{D}$ and $\Phi^{b}_{L}$ are not necessarily convex sets.
\end{proof}
Although the optimal policy is of threshold-type, as shown in Theorem \ref{thm:structure_general}, the subsets of the belief space associated with actions $D$ and $L$, i.e., $\Phi^{b}_{D}$ and $\Phi^{b}_{L}$, are not necessarily convex. Each of these sets can be composed of infinitely many intervals; therefore, despite the threshold-type structure, characterizing the optimal policy may require identifying infinitely many threshold values. Finding the exact $N$ and corresponding threshold values is elusive and out of the scope of this paper.

\subsection{Special Case: $R_1 = 0$}
\label{SP-case}
In order to further simplify the problem we assume that it is not possible to transmit any bits when the channel is in a BAD state, i.e., $R_1=0$ and $R_2=R$. Hence, action $L$ is no longer available, and the action for sensing the channel consists of  only $OD$ which is denoted by $O$ in the rest of this section.

With this modified model, the expected reward function can be simplified as follows:
\begin{align}
R(S_{t},A_{t}) = \left\{
\begin{array}{ll}
X_{t}R, & \text{if }  A_t = H\ \text{and}\ B_t\geq \mathcal{E}_T,\\
(1-\tau)X_t R, & \text{if }  A_t = O\ \text{and}\ B_t\geq \mathcal{E}_T,\\
0, & \text{otherwise}.
\end{array} \right.\label{eq:R}
\end{align}

Since at least $\mathcal{E}_T$ energy units is required for transmission, if  $b<\mathcal{E}_T$, the reward in (\ref{eq:R}) becomes zero. If action $H$ is taken, $R$ bits are transmitted successfully if the channel is in a GOOD state, and 0 bits otherwise. If action $O$ is taken, $\tau\mathcal{E}_T$ energy units is spent sensing the channel with the remainder of the energy being used for transmission if the channel is in a GOOD state. In this case, $(1-\tau)R$ bits are transmitted successfully. If the channel is in a BAD state, the transmitter remains silent in the rest of the time slot. Finally, if action $D$ is taken the reward is zero.

Next, we prove that the optimal policy has a threshold-type structure on the belief state with a finite number of thresholds. Note that, in the modified model, the value function is still convex and Lemmas \ref{thm:convex_general}, \ref{thm:nondecreasing-battery} and \ref{thm:nondecreasing-belief} still hold. Theorem \ref{thm:threshold} below states that the optimal solution of the problem defined in (\ref{Vmax}) is a threshold-type policy with either two or three thresholds on the belief state. Threshold values depend on the state of the battery and system parameters.


\begin{theorem}
\label{thm:threshold}
Let $p\in[0,\ 1]$ and $b\geq 0$. There are thresholds $0\leq\rho_1(b)\leq\rho_2(b)\leq\rho_3(b)\leq 1$, all of which are functions of the battery state $b$, such that for $\tau\mathcal{E}_T\leq b< \mathcal{E}_T$

\begin{align}
\pi^{*}(b,p)
&= \left\{
\begin{array}{rl}
D, & \text{if }\ 0\leq p<\rho_1(b)\ \text{or}\ \rho_2(b)< p\leq 1,\\
O, & \text{if }\ \rho_1(b)\leq p\leq\rho_2(b). \label{threshb}
\end{array} \right.
\end{align}
and for $b\geq \mathcal{E}_T$,
\begin{align}
\pi^{*}(b,\ p)
&= \left\{
\begin{array}{rl}
D, & \text{if }\ 0\leq p<\rho_1(b)\ \text{or}\ \rho_2(b)< p<\rho_3(b)\\
O, & \text{if }\ \rho_1(b)\leq p\leq\rho_2(b),\\
H, & \text{if }\ \rho_3(b)\leq p\leq 1, \label{thresha}
\end{array} \right.
\end{align}

\end{theorem}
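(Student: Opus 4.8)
The plan is to re-run the argument of Theorem~\ref{thm:structure_general}, now exploiting the fact that with $R_1=0$ only the three actions $\{D,O,H\}$ remain, so that the optimal regions can be pinned down exactly and their number bounded. Two facts do the work. First, the action-value functions $V_O(b,\cdot)$ and $V_H(b,\cdot)$ are \emph{affine} in $p$ in both battery regimes (for $\tau\mathcal{E}_T\le b<\mathcal{E}_T$ this is \eqref{ODOT}; for $b\ge\mathcal{E}_T$ it is the displayed forms of $V_{OD}$ and $V_H$), whereas $V_D(b,\cdot)=\beta\sum_m q_m V(\min(b+m,B_{max}),J(p))$ is merely convex, being the composition of the convex map $V(b,\cdot)$ (Lemma~\ref{thm:convex_general}) with the affine map $J$. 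Repeating verbatim the chain \eqref{ineq1}--\eqref{ineq2} — convexity of $V$ in $p$, then $p_1,p_2\in\Phi^{b}_{A}$, then linearity of $V_A$, then $V=\max_A V_A$ — shows that $\Phi^{b}_{O}$ and $\Phi^{b}_{H}$ are convex, hence closed subintervals of $[0,1]$: write $\Phi^{b}_{O}=[\rho_1(b),\rho_2(b)]$ and, for $b\ge\mathcal{E}_T$, $\Phi^{b}_{H}=[\rho_3(b),1]$, the right endpoint being $1$ because $1\in\Phi^{b}_{H}$ (at belief $1$, action $H$ strictly beats $O$ — it earns $R$ rather than $(1-\tau)R$ for the same energy spend and the same belief update to $\lambda_1$ — and beats $D$ by the argument in Appendix~\ref{reviewerblahblah}, see also \cite{efremidus}).

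The remaining work is to locate the intervals. For $\tau\mathcal{E}_T\le b<\mathcal{E}_T$ only $D$ and $O$ are feasible, and $D$ is optimal at the two extreme beliefs: since $J(0)=\lambda_0$, $J(1)=\lambda_1$, and $\min(b+m,B_{max})\ge\min(b+m-\tau\mathcal{E}_T,B_{max})$, Lemma~\ref{thm:nondecreasing-battery} gives $V_D(b,0)\ge V_O(b,0)$ and $V_D(b,1)\ge V_O(b,1)$. Hence $\Phi^{b}_{O}=[\rho_1(b),\rho_2(b)]$ is an interior interval, $\Phi^{b}_{D}\supseteq[0,\rho_1(b))\cup(\rho_2(b),1]$, and since $\Phi^{b}_{D}\cup\Phi^{b}_{O}=[0,1]$, fixing a tie-breaking rule that prefers $O$ gives exactly \eqref{threshb}. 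For $b\ge\mathcal{E}_T$ the key step is that $V_H(b,p)-V_O(b,p)$ is affine and strictly increasing in $p$: when the channel is GOOD both actions spend $\mathcal{E}_T$ and move the belief to $\lambda_1$, so those terms cancel and $V_H(b,p)-V_O(b,p)=p\tau R-(1-p)\delta=p(\tau R+\delta)-\delta$, where $\delta:=\beta\sum_m q_m[V(\min(b+m-\tau\mathcal{E}_T,B_{max}),\lambda_0)-V(\min(b+m-\mathcal{E}_T,B_{max}),\lambda_0)]\ge 0$ by Lemma~\ref{thm:nondecreasing-battery} (as $\tau<1$), and the slope $\tau R+\delta$ is positive. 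Therefore $\{p:V_O(b,p)\ge V_H(b,p)\}=[0,c]$ and $\{p:V_H(b,p)\ge V_O(b,p)\}=[c,1]$ for some $c\in[0,1)$, forcing $\Phi^{b}_{O}\subseteq[0,c]$ and $\Phi^{b}_{H}\subseteq[c,1]$, i.e., $\rho_2(b)\le\rho_3(b)$. Since $\Phi^{b}_{D}\cup\Phi^{b}_{O}\cup\Phi^{b}_{H}=[0,1]$, the complement of $[\rho_1(b),\rho_2(b)]\cup[\rho_3(b),1]$ equals $[0,\rho_1(b))\cup(\rho_2(b),\rho_3(b))$ and is contained in $\Phi^{b}_{D}$; breaking ties in favor of $O$ and $H$ then yields \eqref{thresha}. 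Degenerate configurations ($\Phi^{b}_{O}=\varnothing$, $\rho_2=\rho_3$, $\rho_3=1$, and so on) are absorbed by permitting coincident thresholds, which is why two or three thresholds always suffice; and for $b<\tau\mathcal{E}_T$ no action but $D$ is feasible, so $\pi^{*}\equiv D$.

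The main obstacle — the one step that is not a routine transcription of the Theorem~\ref{thm:structure_general} machinery — is establishing $1\in\Phi^{b}_{H}$, that is, that at belief $1$ it is optimal to transmit now rather than defer. The comparison with $O$ is immediate, but beating $D$ requires the bound $\beta\sum_m q_m[V(\min(b+m,B_{max}),\lambda_1)-V(\min(b+m-\mathcal{E}_T,B_{max}),\lambda_1)]\le R$ on the marginal value of $\mathcal{E}_T$ energy units; this is precisely what is proved in Appendix~\ref{reviewerblahblah}, so in the write-up I would simply invoke it, exactly as is already done inside the proof of Theorem~\ref{thm:structure_general}. Everything else reduces to convexity in $p$ (Lemma~\ref{thm:convex_general}) and monotonicity in $b$ (Lemma~\ref{thm:nondecreasing-battery}).
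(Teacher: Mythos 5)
Your proposal is correct and follows essentially the same route as the paper's proof: convexity of $V(b,\cdot)$ (Lemma~\ref{thm:convex_general}) combined with the affinity of $V_O$ and $V_H$ in $p$ yields convexity of $\Phi^{b}_{O}$ and $\Phi^{b}_{H}$, with $1\in\Phi^{b}_{H}$ and the endpoints $p=0,1$ belonging to $\Phi^{b}_{D}$ in the low-battery regime. The one place you go beyond the paper is the explicit affine-difference computation showing $V_H(b,p)-V_O(b,p)=p(\tau R+\delta)-\delta$ with $\delta\ge 0$ by Lemma~\ref{thm:nondecreasing-battery}, which proves the ordering $\rho_2(b)\le\rho_3(b)$ (the $O$-interval sits to the left of the $H$-interval) that the paper's proof asserts without justification when it writes $\Phi^{b}_{D}=[0,\rho_1(b))\cup(\rho_2(b),\rho_3(b))$; this is a genuine and worthwhile tightening.
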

\begin{proof}
The proof follows similarly to the proof of Theorem \ref{thm:structure_general}. Consider the sets $\Phi^{b}_{A}$ defined in (\ref{sets}) for $A\in\left\{D,O,H\right\}$. 
Note that for $b=0$, $V(b,p)=V_{D}(b,p)$, and hence, $\Phi^{0}_{D}=[0,\ 1]$, and $\Phi^{0}_{O}=\Phi^{0}_{H}=\varnothing$. First, consider battery states $\tau\mathcal{E}_T\leq b<\mathcal{E}_T$. We prove that for any $\tau\mathcal{E}_T\leq b<\mathcal{E}_T$, $\Phi^{b}_{O}$ is convex, which implies the structure of the optimal policy in (\ref{threshb}).
Let $p_1,\ p_2\in \Phi^{b}_{O}$, and $a\in(0,\ 1)$. We have
\begin{align}
V(b,ap_1+(1-a)p_2)&\leq aV(b,p_1)+(1-a)V(b,p_2),\label{ineq1s}\\
&=aV_{O}(b,p_1)+(1-a)V_{O}(b,p_2),\label{eq1s}\\
&=V_{O}(b,ap_1+(1-a)p_2),\label{eq2s}\\
&\leq V(b,ap_1+(1-a)p_2),\label{ineq2s}
\end{align}
where \eqref{ineq1s} follows from Lemma \ref{thm:convex_general}; \eqref{eq1s} is due to the fact that $p_1,\ p_2\in \Phi^{b}_{O}$; \eqref{eq2s} follows from the linearity of $V_{O}$ in $p$; and \eqref{ineq2s} from the definition of $V(b,\ p)$. Hence, $V(b,ap_1+(1-a)p_2) = V_{O}(b,ap_1+(1-a)p_2)$, and it follows that $ap_1+(1-a)p_2\in \Phi^{b}_{O}$, which, in turn, proves the convexity of $\Phi^{b}_{O}$. Note also that $p=0$ and $p=1$ both belong to $\Phi^{b}_{D}$ for all $0\leq b<\mathcal{E}_T$. Hence, for $0\leq b<\mathcal{E}_T$, either $\Phi^{b}_{O}=\varnothing$, or there exists $0<\rho_1(b)\leq\rho_2(b)<1$ such that $\Phi^{b}_{O}=[\rho_1(b),\rho_2(b)]$. Consequently, we have $\Phi^{b}_{D}=[0,\rho_1(b))\cup(\rho_2(b),1]$.

Next, consider $\mathcal{E}_T\leq b \leq B_{max}$. We prove that $\Phi^{b}_{H}$ and $\Phi^{b}_{O}$ are both convex, which implies the structure of the optimal policy in (\ref{thresha}). Let $p_1,\ p_2\in \Phi^{b}_{H}$  and $a\in(0,\ 1)$. Similarly to (\ref{ineq1})-(\ref{ineq2}) we can argue
\begin{align}
V(b,ap_1+(1-a)p_2)&\leq aV(b,p_1)+(1-a)V(b,p_2),\nonumber\\
&=aV_{H}(b,p_1)+(1-a)V_{H}(b,p_2),\nonumber\\
&=V_{H}(b,ap_1+(1-a)p_2),\nonumber\\
&\leq V(b,ap_1+(1-a)p_2).
\end{align}
Thus, $V(b,ap_1+(1-a)p_2) = V_{H}(b,ap_1+(1-a)p_2)$; and hence, $ap_1+(1-a)p_2\in \Phi^{b}_{H}$, which proves the convexity of $\Phi^{b}_{H}$. Since it is always optimal to transmit at rate $R_2$ if the channel is in a GOOD state, $1\in\Phi^{b}_{H}$, and since the convex subsets of the real line are intervals, there exists $\rho_{3}(b)\in(0,1]$ such that $\Phi^b_{H} = [\rho_{3}(b),1]$. Using the same technique we can prove that $\Phi^{b}_{O}$ is convex; and hence, there exists $0<\rho_1(b)\leq\rho_2(b)<1$ such that $\Phi^b_{O}=[\rho_1(b),\rho_2(b)]$. The remaining segments belong to action $D$, and we have $\Phi_{D}=[0,\rho_1(b))\cup(\rho_2(b),\rho_3(b))$.
\end{proof}
Theorem \ref{thm:threshold} proves that at any battery state $b\geq \mathcal{E}_T$, at most three threshold values are sufficient to characterize the optimal policy; whereas two thresholds suffice for $0\leq b<\mathcal{E}_T$. However the optimal policy can even be simpler for some battery states and some instances of the problem as it is possible to have $\rho_2(b) = \rho_3(b)$, or even $\rho_1(b) = \rho_2(b) = \rho_3(b)$. Since, $\Phi^{b}_{D}$ is not a convex set in general (see Theorem \ref{thm:convex_general}), the structure of the optimal policy may result in four different regions even though there are only three possible actions. This may seem counter intuitive since deferring the transmission should not be advantageous when the belief is relatively high. Nevertheless, in Section \ref{num}, we demonstrate that in some cases it is indeed optimal to have a three-threshold policy.



\section{Numerical Results}
\label{num}
In this section, we use numerical techniques to characterize the optimal policy, and evaluate its performance. We utilize the value iteration algorithm to calculate the optimal value function. We numerically identify the thresholds for the optimal policy for different scenarios. We also evaluate the performance of the optimal policy, and compare it with some alternative policies in terms of throughput. 

\subsection{Evaluating the optimal policy}
In the following, we consider the modified system model introduced in Section \ref{SP-case} in which no data can be transmitted in a BAD channel state, i.e., $R_1=0$. Moreover, without loss of generality, we set $M=11$, $\mathcal{E}_T=10$, and $q_{10}=q=1-q_0$ and $q_m=0$ for $m=1,\ldots,9$. We assume that $B_{max}=50$, $\tau=0.2$, $\beta=0.98$, $\lambda_1=0.9$, $\lambda_0=0.6$, $R = 3$ and $q=0.1$. The optimal policy is evaluated using the value iteration algorithm. In Fig. \ref{5region}, each state $(b,\ p)$ is illustrated with a different color corresponding to the optimal policy at that state. In Fig. \ref{5region}, the areas highlighted with blue correspond to those states at which deferring the transmission is optimal, green areas correspond to the states at which sensing the channel is optimal, and finally yellow areas correspond to the states at which transmitting at high rate is optimal. As seen in Fig. \ref{5region}, depending on the battery state the optimal policy may have one, two, or three thresholds on the belief state. For example, when the battery state is $b=20$, there is a single threshold; the transmitter defers transmission up to a belief state of $p=0.8$, and starts transmitting without sensing beyond this value. For no value of the belief state it opts for sensing the channel. On the other hand, when the battery state is $38$, the policy has two thresholds, and three thresholds when the battery state is $28$. Considering the low probability of energy arrivals ($q=0.1$) and the relative high cost of sensing ($\tau=0.2$), the transmitter senses the channel even when its battery state is below the transmission threshold, i.e., $b<10$. 

\begin{figure}[ht]
  \centering
    \includegraphics[scale=.6]{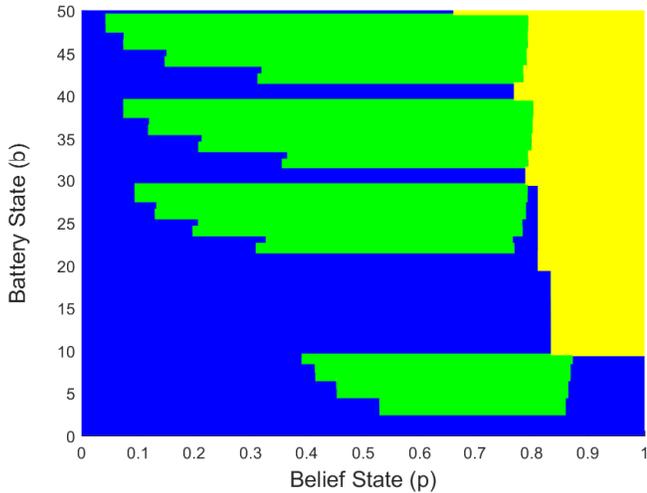}
		  \caption{Optimal thresholds for taking the actions D (blue), O (green), H (yellow) for $B_{max}=50$, $\mathcal{E}_T = 10$, $\tau=0.2$, $\beta=0.98$, $\lambda_1=0.9$, $\lambda_0=0.6$, $R = 3$ and $q=0.1$.}
			\label{5region}
\end{figure}

Another interesting observation from Fig. \ref{5region} is the periodicity of the optimal policy with respect to the battery. This is particularly visible for action $D$ taken when the battery state is an integer multiple of $\mathcal{E}_T$, which is then followed by action $O$ for increasing beliefs when  the battery state is more than 20. The value function corresponding to the parameters used to obtain Fig. \ref{5region} is depicted in Fig. \ref{valuefunction}. Note the staircase behavior of the value function. There is a jump in the value function when the battery state is an integer multiple of $\mathcal{E}_T$, while it approximately remains the same when the battery state is confined between two consecutive integer multiples of $\mathcal{E}_T$, i.e., ($n\mathcal{E}_T\leq b < (n+1)\mathcal{E}_T$), where $n$ is an integer. Hence, when the battery state of the transmitter is an integer multiple of $\mathcal{E}_T$, any action other than deferring will, with high probability, transition into a state with a relatively lower value. Thus, the transmitter chooses action $D$ unless its belief is relatively high. However, when the battery state is between two consecutive integer multiples of $\mathcal{E}_T$, it is safe to sense the channel, since, in the worst case, the channel is  in a BAD state and the transmitter loses only $\tau\mathcal{E}_T<\mathcal{E}_T$ units, but it makes a transition into a state which approximately has the same value. Thus, at those values of the battery, the transmitter senses the channel for moderate belief states.

\begin{figure}[ht]
  \centering
    \includegraphics[scale=0.6]{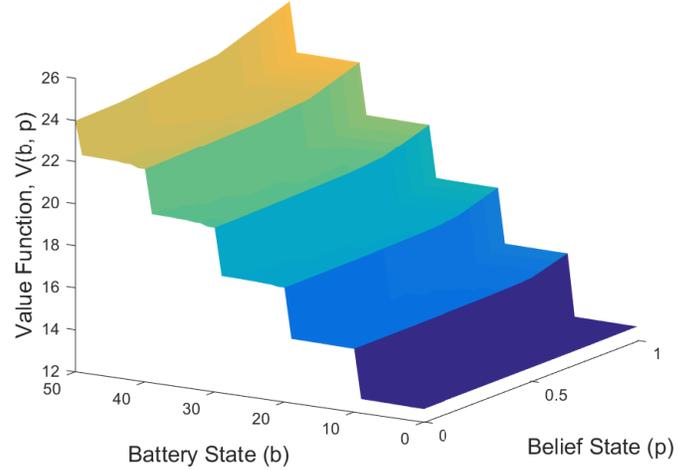}
		  \caption{Value function associated with $B_{max}=50$, $\mathcal{E}_T=10$, $\tau=0.2$, $\beta=0.98$, $\lambda_1=0.9$, $\lambda_0=0.6$, $R = 3$ and $q=0.1$.}
			\label{valuefunction}
\end{figure}

To investigate the effect of the EH rate, $q$, on the optimal transmission policy, we consider the system parameters $B_{max}=50$, $\tau=0.1$, $\beta=0.9$, $\lambda_1=0.8$, $\lambda_0=0.4$, and $R = 3$. We illustrate the optimal transmission policy for $q=0.8$ and $q=0.2$ in Fig. \ref{7region} and Fig. \ref{9region}, respectively. It can be observed by comparing those two figures that the yellow regions are much larger and blue areas are much more limited in Fig. \ref{7region}. This is because when the energy arrivals are more frequent, the EH node tends to consume its energy more generously. We also observe that the transmitter always defers its transmission for $b<10$ when energy is limited (in Fig. \ref{9region}), whereas it may opt for sensing the channel when energy is more abundant.

\begin{figure}
        \centering
        \begin{subfigure}[t]{0.49\textwidth}
                \includegraphics[width=\textwidth]{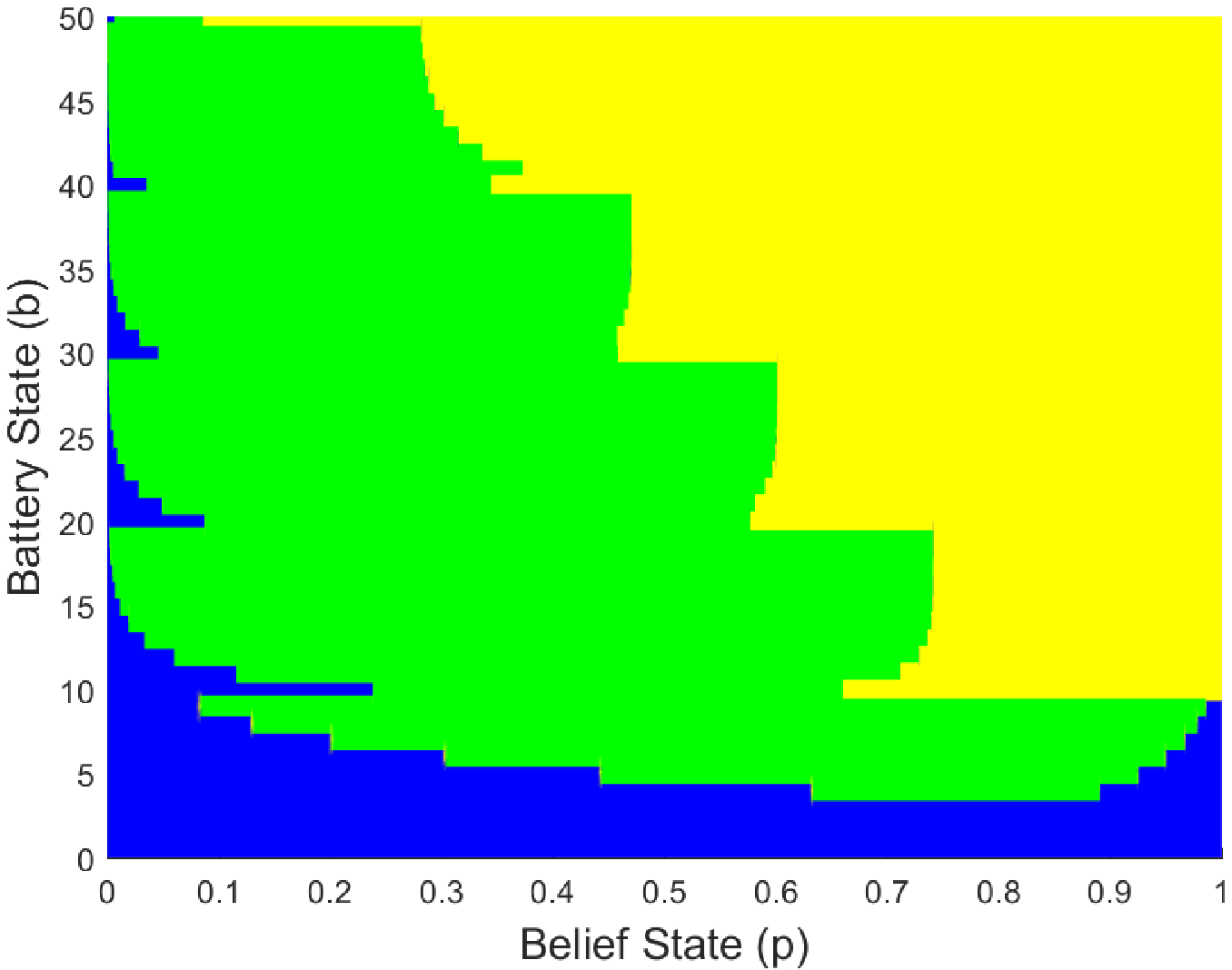}
                \caption{$q =0.8$.}
                \label{7region}
        \end{subfigure}%
        \hfill
        \begin{subfigure}[t]{0.49\textwidth}
                \includegraphics[width=\textwidth]{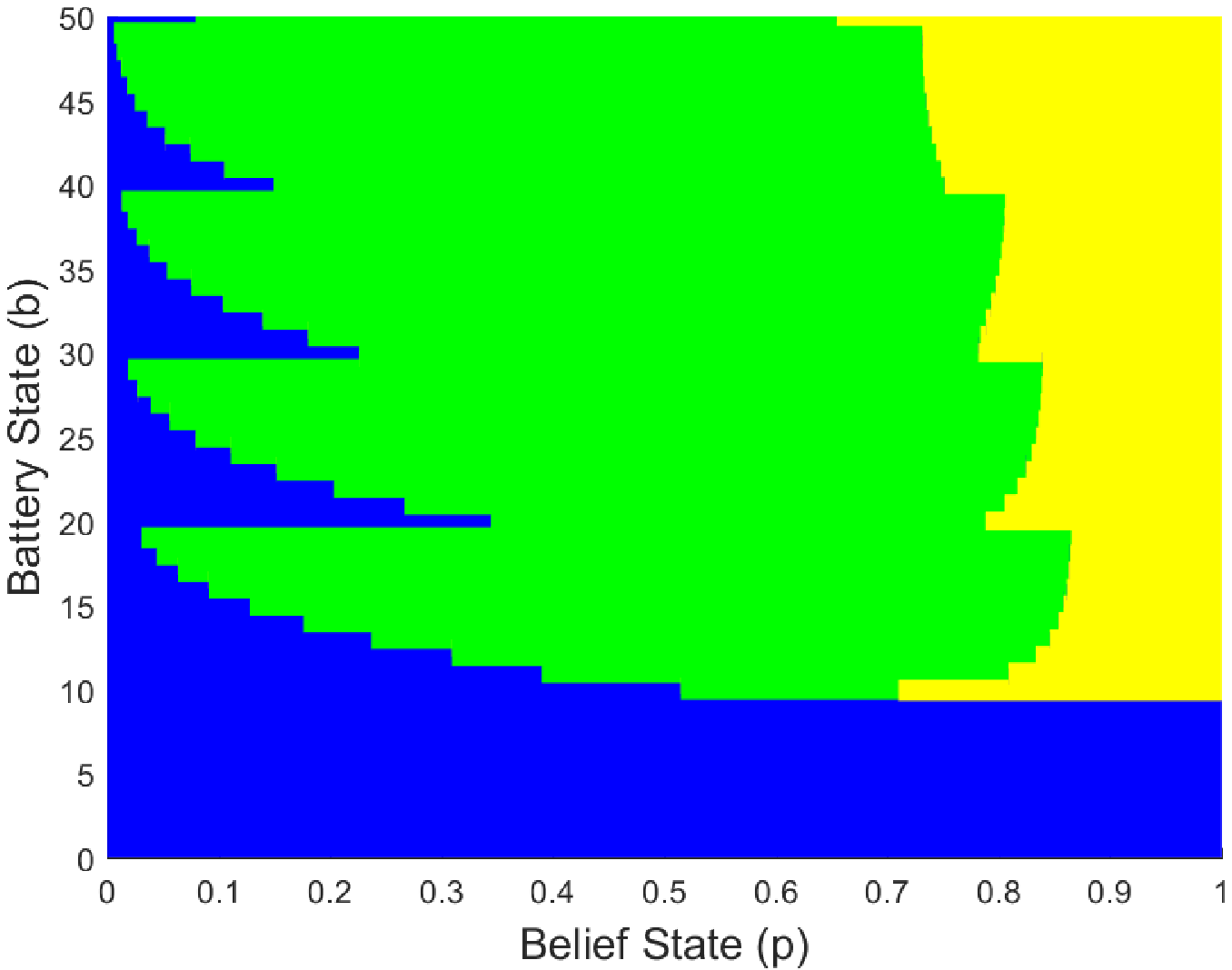}
                \caption{$q =0.2$. }
                \label{9region}
        \end{subfigure}
				\caption{Optimal thresholds for taking the actions D (blue), O (green), H (yellow) for $B_{max}=50$, $\mathcal{E}_T = 10$, $\tau=0.1$, $\beta=0.9$, $\lambda_1=0.8$, $\lambda_0=0.4$, and $R = 3$.}
\label{qregion}
\end{figure}



Next, we investigate the effect of the sensing cost, $\tau$, on the optimal policy. We set the system parameters as $B_{max}=50$, $\beta=0.9$, $\lambda_1=0.8$, $\lambda_0=0.4$, $R = 3$ and $q=0.8$. The regions for optimal actions are shown in Fig. \ref{10region} and Fig. \ref{12region} for sensing cost values $\tau=0.2$ and $\tau = 0.3$, respectively. By comparing Fig. \ref{10region} and Fig. \ref{12region}, it is evident that a higher cost of sensing reduces the incentive for sensing the channel. We observe in Fig. \ref{12region} that the green areas have shrunk as compared to Fig. \ref{10region}, i.e, the transmitter is more likely to take a risk and transmit without sensing, or defer its transmission, when sensing consumes a significant portion of the available energy.

\begin{figure}
        \centering
        \begin{subfigure}[t]{0.49\textwidth}
                \includegraphics[width=\textwidth]{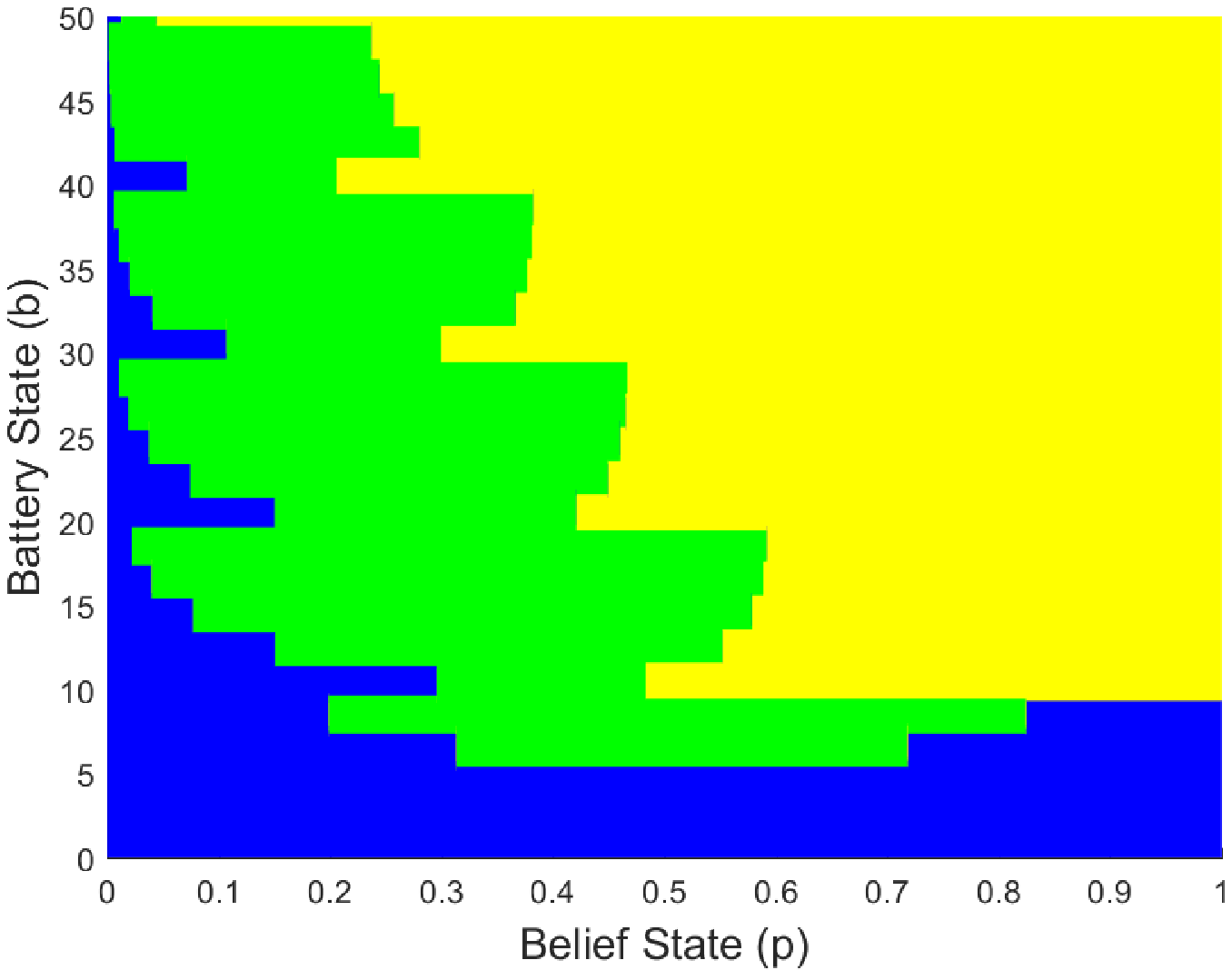}
                \caption{$\tau =0.2$.}
                \label{10region}
        \end{subfigure}
       \hfill
        \begin{subfigure}[t]{0.49\textwidth}
                \includegraphics[width=\textwidth]{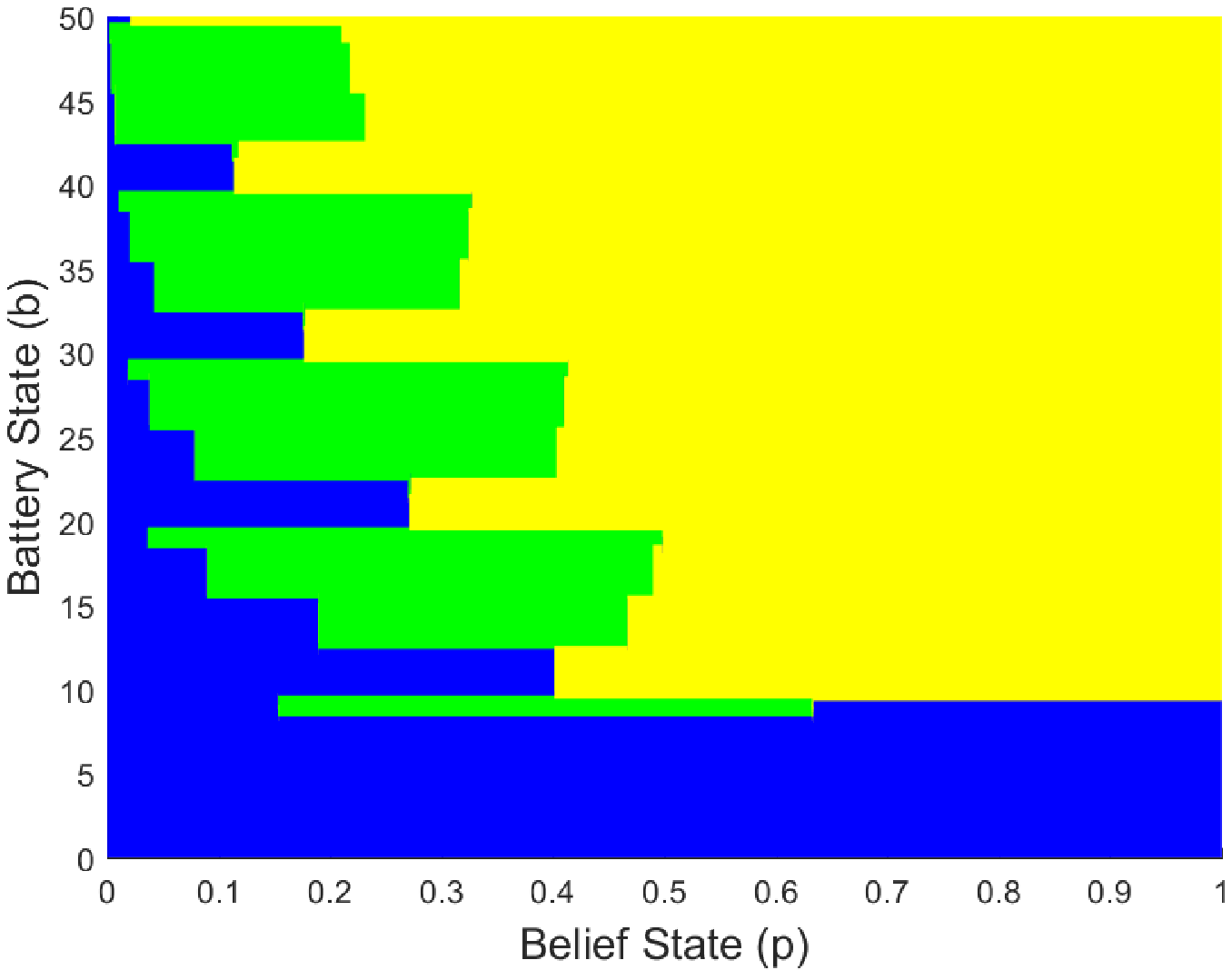}
                \caption{$\tau =0.3$. }
                \label{12region}
        \end{subfigure}
				\caption{Optimal thresholds for taking the actions D (blue), O (green), H (yellow) for $B_{max}=50$, $\beta=0.9$, $\mathcal{E}_T = 10$, $\lambda_1=0.8$, $\lambda_0=0.4$, $R = 3$ and $q=0.8$.}
\label{tauregion}
\end{figure}



\subsection{Throughput performance}
In this section, we compare the performance of the optimal policy with three alternative policies, i.e., a greedy policy, a single-threshold policy and an opportunistic policy. For the optimal policy, as an alternative to the value iteration algorithm, we also employ \emph{policy search} approach, which exploits the threshold structure of the optimal policy that we have proven. For the value iteration algorithm, the average discounted reward is evaluated with a discount value close to 1 ($\beta=0.999$) to approximate the optimal average throughput. Note that, the value iteration algorithm does not exploit the structure of the optimal policy and uses action-value functions to maximize the discounted reward. The policy search method \cite{psearch}, on the other hand, uses the structure of the optimal policy, and the thresholds are directly optimized to maximize the average throughput (and not the discounted throughput). In the \emph{greedy policy}, the EH node transmits whenever it has energy in its battery. In the \emph{single-threshold policy}, there are only two actions: defer (D) or transmit (H). The belief of the transmitter on the current channel state depends only on the ACK/NACK feedback from the receiver, and channel sensing is not exploited at all. We optimize the threshold corresponding to each battery state for the single-threshold policy using the value iteration algorithm. Meanwhile, the \emph{opportunistic policy} senses the channel at the beginning of every time slot, and transmits $(1-\tau)R$ bits if the channel is in a GOOD state, and defers otherwise. By choosing the parameters $B_{max}=50$, $\mathcal{E}_T = 10$, $\beta=0.999$, $\lambda_1=0.8$, $\lambda_0=0.2$, $R = 2$, the  throughput achieved by these four policies are plotted in Fig. \ref{fig:TH} with respect to the EH rate $q$. Fig. \ref{fig:TH1} and Fig. \ref{fig:TH2} correspond to the sensing costs of $\tau=0.1$ and $\tau=0.2$, respectively.

\begin{figure}
        \centering
        \begin{subfigure}[t]{0.5\textwidth}
                \includegraphics[width=\textwidth]{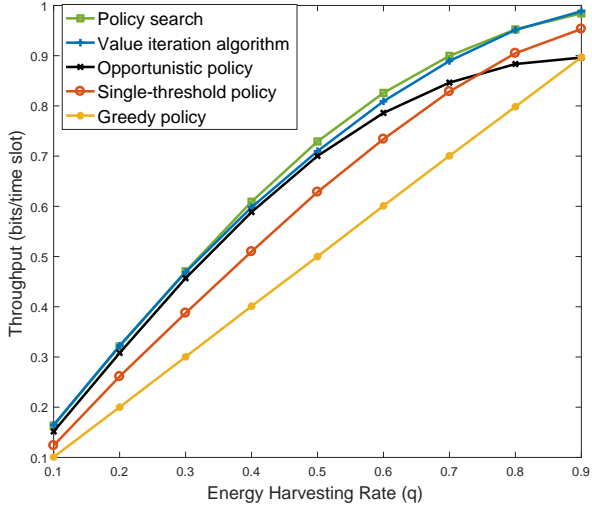}
                \caption{$\tau =0.1$.}
                \label{fig:TH1}
        \end{subfigure}%
        \hfill
        \begin{subfigure}[t]{0.5\textwidth}
                \includegraphics[width=\textwidth]{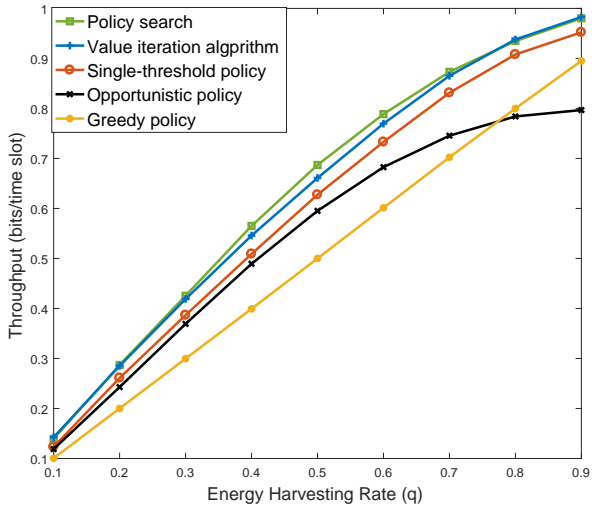}
                \caption{$\tau =0.2$. }
                \label{fig:TH2}
        \end{subfigure}
				\caption{Throughputs by the optimal, greedy, single-threshold and opportunistic policies as a function of the EH rate, $q$.}
\label{fig:TH}
\end{figure}

As expected, the greedy policy performs much worse than the optimal policy as it does not exploit the transmitter's knowledge about the state of the channel. We can see that, by simply exploiting  the ACK/NACK feedback from the receiver in order to defer transmission, the single-threshold policy already achieves a significantly higher throughput than the greedy policy at all values of the EH rate. Note that single-threshold and greedy policies do not have the sensing capability, and accordingly, the sensing cost, $\tau$, has no effect on their performance. However, $\tau$ affects the optimal and opportunistic policies which have sensing capabilities. In particular, $\tau$ affects the opportunistic policy drastically, since this policy senses the channel at the beginning of each time slot. When the sensing cost is relatively low, it can be seen from Fig. \ref{fig:TH1} that the opportunistic policy achieves a near optimal throughput except when the EH rate, $q$, is high. For high values of $q$, the EH transmitter suffers less from energy deprivations and instead of sensing at each time slot, using the whole time slot for transmission becomes more beneficial. Hence, we observe that always sensing the channel performs poorly $q$ is high. When $\tau$ is relatively high, it can be seen from Fig. \ref{fig:TH2} that the opportunistic policy performs worse than the single-threshold policy for all values of $q$, and even worse than the greedy policy for high values of $q$. On the other hand, the optimal policy, by intelligently utilizing the sensing capability, yields a superior performance for all the parameter values.

\begin{remark}
We remark that the policy search achieves a better performance than the value iteration algorithm. This is because the latter maximizes the discounted reward rather than the average reward. To obtain the optimal average reward using value iteration algorithm, we need to set $\beta \to 1$. However, the value iteration algorithm is computationally demanding, and letting $\beta \to 1$ deteriorates its convergence rate to the point of infeasibility. On the other hand, policy search optimizes the thresholds directly to maximize the average throughput, and it is much faster compared to the value iteration algorithm. We owe this superior performance to the structure of the optimal policy that we have shown.
\end{remark}

\subsection{Optimal policy evaluation with two different transmission rates}
When the transmitter has the ability to transmit at two different rates, we proved that the optimal policy is a
threshold-type policy; however, due to non-convexity of sets $\Phi^{b}_{D}$ and $\Phi^{b}_{L}$ it is not possible to characterize the optimal policy as we have done for a transmitter with a single rate in (\ref{thresha}) and (\ref{threshb}). Instead, we numerically evaluate the optimal policy as follows.

Let $B_{max}=5$, $\mathcal{E}_T=200$, $\mathcal{E}_S=7$, $\beta=0.7$, $\lambda_1=0.98$, $\lambda_0=0.81$, $R_1 = 2.91$, $R_2=3$ and $q_{201}=q=1-q_0$ and $q_m=0$ for $m=1,\ldots,200$. Note that these parameters are chosen in a way to show the non-convexity of the sets $\Phi^{b}_{D}$ and $\Phi^{b}_{L}$ and may not be relevant for a practical scenario.
The optimal policy, obtained through the value iteration algorithm, is represented in Fig. \ref{2rate}. In the figure, the areas highlighted with blue correspond to the states at which deferring ($D$) is optimal, red correspond to states at which transmitting at the low rate ($L$) is optimal, green correspond to states at which sensing and deferring is optimal ($OD$), black correspond to  states at which sensing and transmitting opportunistically ($OT$) is optimal, and yellow correspond to the states for which transmitting without sensing ($H$) is optimal.

\begin{figure}[ht]
  \centering
    \includegraphics[scale=.6]{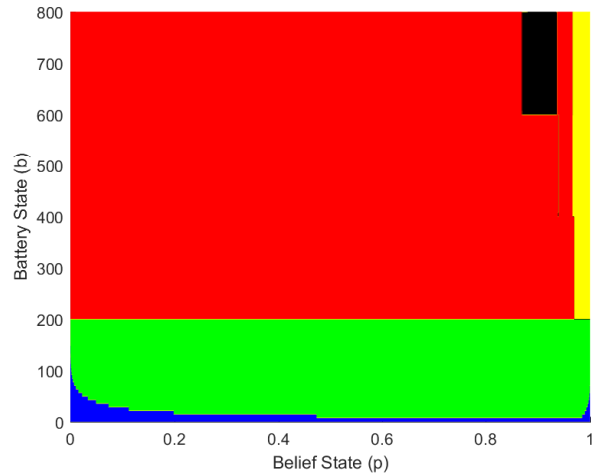}
		  \caption{Optimal thresholds for taking the actions D (blue), L (red), OD (green), OT (black), H (yellow) for $B_{max}=800$, $\mathcal{E}_T=200$, $\tau=0.035$, $\beta=0.7$, $\lambda_1=0.98$, $\lambda_0=0.81$, $R_1 = 2.91$, $R_2=3$ and $q=0.1$.}
			\label{2rate}
\end{figure}
As expected the optimal policy is again a  battery-dependent threshold-type policy with respect to the belief state. The sets $\Phi^{b}_{D}$ and $\Phi^{b}_{L}$ (blue and red areas, respectively) are not convex. In theory, an optimal policy may have infinite threshold values if the sets $\Phi^{b}_{D}$ and $\Phi^{b}_{L}$ are intertwined into infinitely many alternating intervals. We observe in Fig. \ref{2rate} that, for the parameters considered here, this is not the case and the optimal policy consists of at most three-threshold policies.

\section{Conclusions}
\label{concl}
In this work, we considered an EH transmitter equipped with a finite-capacity battery, operating over a time-varying finite-capacity channel with memory, modeled as a two-state Gilbert-Elliot channel. The transmitter receives ACK/NACK feedback after each transmission, which can be used to track the channel state.  Additionally, the transmitter has the capability to sense the channel, which allows the transmitter to obtain the current channel state at a certain energy and time cost. Therefore, at the beginning of each time slot, the transmitter has the following possible actions to maximize the total expected discounted number of bits transmitted over an infinite time horizon: $i)$ deferring transmission, $ii)$ transmitting at a low rate
of $R_1$
bits with guaranteed successful delivery, $iii)$ transmitting at a high rate of $R_2$
bits, and $iv)$ sensing the channel to reveal the channel state by consuming a portion of its energy and
transmission time, and then deciding either to defer or to transmit at a suitable rate based on the channel state. We formulated the problem as a POMDP, which is then converted into an MDP with continuous state space by introducing a belief parameter for the channel state. We have shown that the optimal transmission policy has a threshold structure with respect to the belief state, where the optimal threshold values depend on the battery state.

We then considered the simplified problem by assuming that it is not possible to transmit any information when the channel is in a BAD state, for which we were able to prove that the optimal policy has at most three thresholds. We calculated the optimal threshold values numerically using the value iteration and policy search algorithms. We compared the throughput achieved by the optimal policy to those achieved by a greedy policy and a single-threshold policy, which do not exploit the channel sensing capability, as well as an opportunistic policy, which senses the channel at every time slot. We have shown through simulations that the intelligent channel sensing capability improves the performance significantly, thanks to the increased adaptability to channel conditions. 


\begin{appendices}

\section{Proof of Lemma \ref{thm:convex_general}}
\label{sec:convex_general}
\allowdisplaybreaks
\begin{proof}
Define $V(b,p,n)$ as the optimal value function for the finite-horizon problem spanning only $n$ time slots. We will first prove the convexity of $V(b,\ p,\ n)$ in $p$ by induction. Optimal value function can be written as follows,
\begin{align}
V(b,\ p,\ n) = \max\left\{\right.&V_{D}(b,\ p,n),V_{L}(b,\ p,n),V_{OD}(b,\ p,n),\nonumber\\
& V_{OT}(b,\ p,n),V_{H}(b,\ p,n)\left.\right\},
\end{align}
where
\begin{align}
V_{D}(b,\ p,\ n)=& \beta\sum^{M-1}_{m=0}q_m V(\min(b+m,B_{max}),J(p),\ n-1)
,\label{VDn}
\end{align}
\begin{align}
&V_{L}(b,\ p,\ n)=R_1  \nonumber\\
&\hspace{0.5cm}+ \beta\sum^{M-1}_{m=0}q_m V(\min(b+m-\mathcal{E}_T,B_{max}),J(p),\ n-1)\label{VLn},
\end{align}
\begin{align}
&V_{OD}(b,\ p,\ n) = p\bigg[(1-\tau)R_2\nonumber\\
&+ \beta\sum^{M-1}_{m=0}q_m V(\min(b+m-\mathcal{E}_T,B_{max}),\lambda_1,\ n-1) \bigg]\nonumber\\
&+(1-p)\left[ \beta\sum^{M-1}_{m=0}q_m V(\min(b+m-\tau\mathcal{E}_T,B_{max}),\lambda_0,\ n-1) \right],\nonumber\\
&\hspace{6cm}\text{for}\ \  b\geq \mathcal{E}_T,\label{VOn1}
\end{align}
\begin{align}
&V_{OT}(b,\ p,\ n) = p\bigg[(1-\tau)R_2 \nonumber\\
&+ \beta\sum^{M-1}_{m=0}q_m V(\min(b+m-\mathcal{E}_T,B_{max}),\lambda_1,\ n-1) \bigg]\nonumber\\
&+(1-p)\bigg[(1-\tau)R_1\nonumber\\
&+ \beta\sum^{M-1}_{m=0}q_m V(\min(b+m-\mathcal{E}_T,B_{max}),\lambda_0,\ n-1) \bigg],\ \text{for}\ b\geq \mathcal{E}_T\label{VOT},
\end{align}
\begin{align}
&V_{OD}(b,\ p,n) \nonumber\\
&= p \beta\sum^{M-1}_{m=0}q_m V(\min(b+m-\tau\mathcal{E}_T,B_{max}),\lambda_1,\ n-1)\nonumber\\
&+(1-p) \beta\sum^{M-1}_{m=0}q_m V(\min(b+m-\tau\mathcal{E}_T,B_{max}),\lambda_0,\ n-1),\nonumber\\
&\hspace{5cm}\mbox{ for }\tau\mathcal{E}_T\leq b<\mathcal{E}_T,\label{VOn2}
\end{align}
\begin{align}
&V_{H}(b,\ p,\ n)= p\bigg[R_2\nonumber\\
& + \beta\sum^{M-1}_{m=0}q_m V(\min(b+m-\mathcal{E}_T,B_{max}),\lambda_1,\ n-1) \bigg]\nonumber\\
&+(1-p)\left[ \beta\sum^{M-1}_{m=0}q_m V(\min(b+m-\mathcal{E}_T,B_{max}),\lambda_0,\ n-1) \right],\nonumber\\
&\hspace{6.5cm} \text{for}\ b\geq \mathcal{E}_T\label{VAn}.
\end{align}

Note that when $b<\mathcal{E}_T$, we have $V(b,\ p,\ 1)=0$, and when $b\geq \mathcal{E}_T$ we have $V(b,\ p,\ 1)=\max\left\{R_1,\ p R_2,\ (1-\tau)pR_2, (1-\tau)[pR_2+(1-p)R_1]\right\}$ which is a maximum of four convex functions. We see that $V(b,\ p,\ 1)$ is a convex function of $p$. 

Now, let us assume that $V(b,\ p,\ n-1)$ is convex in $p$ for any $b\geq 0$, then for $a\in [0,\ 1]$ we can investigate the convexity of the value function for each action separately as follows.

For deferring the transmission, i.e., $A=D$, we can write:
\begin{align}
&V_{D}\left(b,\ a p_1+(1-a)p_2, n\right)\nonumber\\
&= \beta\sum^{M-1}_{m=0}q_m V(\min(b+m,B_{max}),J(a p_1+(1-a)p_2),n-1)\nonumber\\
&=\beta\sum^{M-1}_{m=0}q_m V(\min(b+m,B_{max}), a J(p_1)+(1-a)J(p_2),n-1)\nonumber\\
&\leq a \beta\sum^{M-1}_{m=0}q_m V(\min(b+m,B_{max}),J(p_1),n-1)\nonumber\\
&\ \ + (1-a) \beta\sum^{M-1}_{m=0}q_m V(\min(b+m,B_{max}),J(p_2),n-1)\nonumber\\
&=aV_D(b,\ p_1,\ n) + (1-a)V_D(b,\ p_2,\ n)
\end{align}
Hence, $V_{D}(b,\ p,\ n)$ is convex in $p$. Similarly, consider action $L$:

\begin{align}
&V_{L}\left(b,\ a p_1+(1-a)p_2, n\right)= R_1\nonumber\\
&+\beta\sum^{M-1}_{m=0}q_m V\big(\min(b+m-\mathcal{E}_T,B_{max}),J(a p_1+(1-a)p_2),n-1\big)\nonumber\\
&=R_1+ \beta\sum^{M-1}_{m=0}q_m V\big(\min(b+m-\mathcal{E}_T,B_{max})\nonumber\\
&\hspace{3cm},a J(p_1)+(1-a)J(p_2),n-1\big)\nonumber\\
&\leq a R_1 + a \beta\sum^{M-1}_{m=0}q_m V\big(\min(b+m-\mathcal{E}_T,B_{max}),J(p_1),n-1\big)\nonumber\\
&+ (1-a) R_1 \nonumber\\
&+ (1-a) \beta\sum^{M-1}_{m=0}q_m V\big(\min(b+m-\mathcal{E}_T,B_{max}),J(p_2),n-1\big)\nonumber\\
& = aV_L(b,\ p_1,\ n) + (1-a)V_L(b,\ p_2,\ n).
\end{align}

Thus, $V_{L}(b,\ p,\ n)$ is also convex in $p$. Note that  $V_{OD}(b,\ p,\ n)$, $V_{OT}(b,\ p,\ n)$, and $V_H(b,\ p,\ n)$ are linear functions of $p$, thus they are also convex in $p$. Since the value function $V(b,\ p,\ n)$ is the maximum of five (or, in some cases two) convex functions when $b\geq \mathcal{E}_T$ ($\tau\mathcal{E}_T\leq b<\mathcal{E}_T$), it is also convex. By induction we can claim the convexity of $V(b,\ p,\ n)$ for all $n$. Since $V(b,\ p,\ n)\rightarrow V(b,\ p)$ as $n\rightarrow\infty$, $V(b,\ p)$ is also convex.
\end{proof}
\section{Proof of Lemma \ref{thm:nondecreasing-battery}}
\label{sec:proofofthm2}
\begin{proof}
We will again use induction to prove the claim for $V(b,\ p,\ n)$ defined as in Appendix \ref{sec:convex_general} as the optimal value function when the decision horizon spans $n$ stages. We have $V(b,p,1)=0$ if $b<\mathcal{E}_T$ and we have $V(b,p,1)=\max\left\{R_1,\ p R_2,\ (1-\tau)pR_2, (1-\tau)[pR_2+(1-p)R_1]\right\}$ if $b\geq \mathcal{E}_T$. Hence, $V(b,p,1)$ is trivially non-decreasing in $b$. Suppose that $V(b,p,n-1)$ is non-decreasing in $b$.
Each of the value functions given in (\ref{VDn}), (\ref{VLn}), (\ref{VOn1}), (\ref{VOT}), (\ref{VOn2}) and (\ref{VAn}) is the summation of positive weighted non-decreasing functions. Therefore, they are all non-decreasing in $b$. Since the optimal value function is the maximum of these non-decreasing functions, it is also non-decreasing in $b$ for any $n$. Similarly to Appendix \ref{sec:convex_general}, by letting $n\rightarrow\infty$, we conclude that $V(b,\ p)$ is non-decreasing in $b$.
\end{proof}
\section{Proof of Lemma \ref{thm:nondecreasing-belief}}
\label{sec:proofofthm3}
\begin{proof}
We employ induction on $V(b,\ p,\ n)$ once again. For $n=1$, $V(b,\ p,\ 1)$ is $0$ if $b<\mathcal{E}_T$, and $\max\left\{R_1,\ p R_2,\ (1-\tau)pR_2, (1-\tau)[pR_2+(1-p)R_1]\right\}$ if $b\geq \mathcal{E}_T$. Therefore, $V(b,\ p,\ 1)$ is non-decreasing in $p$ for any $b$.

Assume that $V(b,\ p,\ n-1)$ is non-decreasing in $p$. Since $J(p)$ is non-decreasing, it is easy to see that $V_D(b,\ p,\ n)$ in (\ref{VDn})  and $V_L(b,\ p,\ n)$ in (\ref{VLn}) are also non-decreasing.

Since $V_A(b,\ p,\ n)$s for $A\in\left\{OD,OT,H\right\}$ are linear in $p$, we have  $V_A(b,\ ap_1+(1-a)p_0,n)=aV_A(b,\ p_1,\ n)+(1-a)V_A(b,\ p_0,\ b)$. Using this result, we have 
\begin{subequations}
\begin{align}
&V_{A}(b,\ p_1,\ n)-V_{A}(b,\ p_0,\ n)\nonumber\\
&= V_{A}(b,\ p_1-p_0+p_0,\ n)-V_{A}(b,\ p_0,\ n)\\
&=V_{A}(b,\ p_1-p_0,\ n)\geq 0,\ \ A\in\left\{OD,OT,H\right\}   \label{VAconv}
\end{align}
\end{subequations}
Note that ($\ref{VAconv}$) follows from the fact that $V_{A}(b,\ p_1-p_0+p_0,\ n)=V_{A}(b,\ p_1-p_0,\ n)+V_{A}(b,\ p_0,\ n)$. 
Since the value function, $V(b,\ p,\ n)$, is the maximum of non-decreasing functions, it is also non-decreasing. Hence, by letting $n\rightarrow\infty$, we prove that $V(b,p)$ is non-decreasing in $p$.
\end{proof}

\section{Optimality of always transmitting in a GOOD state}
\label{reviewerblahblah}

After the sensing outcome is revealed to be in a GOOD state, the transmitter may defer, or transmit at low rate, instead of transmitting at high rate. It is easy to see that, it is suboptimal to transmit at low rate when the channel is in a GOOD state. Any low rate transmission can be replaced by a high rate transmission at no additional cost, resulting in a higher value function. To show that it is also suboptimal to defer when the channel is in a GOOD state, we need to define two new actions in addition to actions $OD$ and $OT$. We define the action $ODD$, which defers transmission whatever the channel state is, and the action $OTD$, which defers transmission after sensing a GOOD channel state, but it transmits at a low rate in a BAD state. The action-value function for actions $ODD$ and $OTD$ evolve as follows:

\begin{align}
&V_{ODD}(b,\ p)= p\left[\beta\sum^{M-1}_{m=0}q_m V(\min(b+m-\tau\mathcal{E}_T,B_{max}),\lambda_1) \right]\nonumber\\
&+(1-p)\left[ \beta\sum^{M-1}_{m=0}q_m V(\min(b+m-\tau\mathcal{E}_T,B_{max}),\lambda_0) \right],\\
&V_{OTD}(b,\ p)= p\left[\beta\sum^{M-1}_{m=0}q_m V(\min(b+m-\tau\mathcal{E}_T,B_{max}),\lambda_1) \right]\nonumber\\
&+(1-p)\left[(1-\tau)R_1+ \beta\sum^{M-1}_{m=0}q_m V(\min(b+m-\mathcal{E}_T,B_{max}),\lambda_0) \right].
\end{align}


We will show that it is optimal to transmit after sensing a GOOD channel state by proving that $V_{OD}(b,\ p)>V_{ODD}(b,\ p)$ and $V_{OT}(b,\ p)>V_{OTD}(b,\ p)$, $\forall\  b,\ p$. First, we need the following lemma.

\begin{lemma}
\label{thm:value_bound}
	 For $b\geq 1$ and any $0\leq p\leq 1$, $V(b+(1-\tau)\mathcal{E}_T,\ p)-V(b,\ p)< (1-\tau)R_2$.
\end{lemma}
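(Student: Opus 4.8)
The claim bounds the marginal value of receiving an extra $(1-\tau)\mathcal{E}_T$ units of energy by the one-shot reward $(1-\tau)R_2$ obtainable from spending exactly that much energy on a rate-$R_2$ transmission (after sensing). The natural approach is induction on the finite-horizon value function $V(b,p,n)$ introduced in Appendix \ref{sec:convex_general}, since we have already shown $V(b,p,n)\to V(b,p)$. For the base case $n=1$, $V(b,p,1)$ is either $0$ (if $b<\mathcal{E}_T$) or a maximum of terms each at most $R_2$; a short case check on whether $b$ and $b+(1-\tau)\mathcal{E}_T$ straddle the threshold $\mathcal{E}_T$ gives $V(b+(1-\tau)\mathcal{E}_T,p,1)-V(b,p,1)\le R_2$, and in fact $<(1-\tau)R_2$ needs a little care when $b<\mathcal{E}_T\le b+(1-\tau)\mathcal{E}_T$ — here the difference is $\max\{R_1,pR_2,\ldots\}\le R_2$, so strictness may require the observation $R_1<R_2$ together with $(1-\tau)<1$; I would state the base case with the non-strict bound $\le(1-\tau)R_2$ if strictness is not literally available at $n=1$ and recover strictness from the discounting in the inductive step, or simply note $\beta<1$ forces the eventual bound to be strict.

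\textbf{Inductive step.} Assume $V(b+(1-\tau)\mathcal{E}_T,p,n-1)-V(b,p,n-1)\le (1-\tau)R_2$ for all $b\ge 1$, $p\in[0,1]$. I want to show the optimal action at state $(b+(1-\tau)\mathcal{E}_T,p)$ in the $n$-stage problem cannot beat, by more than $(1-\tau)R_2$, the best the transmitter can do at $(b,p)$. The cleanest route is a coupling/mimicking argument: whatever action $A^\star$ is optimal at the richer state, the transmitter at the poorer state either takes the same action — in which case both the immediate reward terms coincide and the battery states after the action differ by exactly $(1-\tau)\mathcal{E}_T$ (note $\min(\cdot,B_{max})$ only shrinks the gap), so the inductive hypothesis applied slot-by-slot, together with $\beta<1$, closes the bound — or, when $A^\star$ is infeasible at the poorer state (e.g. $b<\mathcal{E}_T$ but $b+(1-\tau)\mathcal{E}_T\ge\mathcal{E}_T$, so $H$, $OT$ are unavailable), the poorer state falls back to $D$ or to $OD$ in its $\tau\mathcal{E}_T$-only mode, and here the immediate-reward gap is at most $(1-\tau)R_2$ while the continuation values are compared via Lemma \ref{thm:nondecreasing-battery} and the inductive hypothesis. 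One has to track the energy bookkeeping carefully across the five action-value expressions in Appendix \ref{sec:convex_general}, using that the post-action battery gap never exceeds $(1-\tau)\mathcal{E}_T$ and that every continuation is a $q_m$-weighted average of value functions at matched beliefs $\lambda_0,\lambda_1$ or $J(p)$.

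\textbf{Main obstacle.} The delicate point is the mismatch case: when the extra energy unlocks a feasible transmission action that the poorer state cannot take at all. There the immediate reward jumps by up to $(1-\tau)R_2$ (the $OD$ reward $p(1-\tau)R_2$, or $H$'s $pR_2$ — the latter is the genuinely worrying one, since $pR_2$ can exceed $(1-\tau)R_2$). I expect this forces a two-step comparison: at the richer state, compare taking $H$ against taking $OD$ (or $D$), absorbing the difference using the fact that after sensing, a GOOD-state rate-$R_2$ transmission costing $(1-\tau)\mathcal{E}_T$ is available — essentially re-deriving that the richer state's advantage over the poorer state is itself bounded by the value of one extra "free" high-rate opportunity, which is exactly $(1-\tau)R_2$ in expectation weighted by $p$, hence $\le(1-\tau)R_2$. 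Getting the constant exactly right (and strict, via $\beta<1$) in this branch, rather than a looser $R_2$, is where the real work lies; I would isolate it as the one nontrivial calculation and handle all other branches by the uniform coupling argument above.
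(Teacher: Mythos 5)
Your plan is essentially the paper's proof: induction on the $n$-horizon value functions of Appendix \ref{sec:convex_general}; a same-action bound in which the post-action battery levels again differ by (at most) $(1-\tau)\mathcal{E}_T$, so the inductive hypothesis plus the discount gives $\beta(1-\tau)R_2<(1-\tau)R_2$; and the action-mismatch case handled by evaluating the richer state's action at the poorer battery level and invoking optimality of the poorer state's own action, i.e. writing $V_{A_1}(b+(1-\tau)\mathcal{E}_T,p,n)-V_{A_2}(b,p,n)$ as $[V_{A_1}(b+(1-\tau)\mathcal{E}_T,p,n)-V_{A_1}(b,p,n)]+[V_{A_1}(b,p,n)-V_{A_2}(b,p,n)]$ with the second bracket nonpositive. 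That is exactly the decomposition the paper uses.

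The one branch you leave open --- the straddling case $\tau\mathcal{E}_T\le b<\mathcal{E}_T\le b+(1-\tau)\mathcal{E}_T$, where the extra energy unlocks a transmission action and the immediate-reward gap can be $pR_2>(1-\tau)R_2$ --- is a genuine difficulty, and you are right to single it out: the paper's own proof does not treat it. There, the base case is asserted as $V(b+(1-\tau)\mathcal{E}_T,p,1)-V(b,p,1)=0$, which fails when the two battery levels straddle $\mathcal{E}_T$ (the difference is then up to $R_2$, so the claimed bound is simply false for the $n=1$ horizon in that range), and the cross-action step tacitly assumes $A_1$ is feasible at battery $b$. So your proposal is no less complete than the published argument and is more candid about where it is fragile; but as a standalone proof it is unfinished in precisely that branch. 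Your sketched ``two-step comparison'' cannot be closed within the stated induction, because the inductive statement itself is false at $n=1$ in the straddling regime --- one would have to weaken the finite-horizon claim (e.g., add a term that vanishes as $n\to\infty$) or restrict the battery levels at which the bound is asserted before passing to the limit.
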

\begin{proof}
We will use induction to prove the lemma, and define $V(b,\ p,\ n)$ as in Appendix \ref{sec:convex_general}. For $n=1$, we have $V(b+(1-\tau)\mathcal{E}_T,\ p,\ 1)-V(b,\ p,\ 1)=0$. Assume that the lemma holds for $n-1$. We need to show that the lemma also holds for $n$. We will prove that $V_{A_1}(b+(1-\tau)\mathcal{E}_T,\ p,\ n)-V_{A_2}(b,\ p,\ n)\leq (1-\tau)R_2$ for $A_1,A_2\in \mathcal{A}_G$, where $\mathcal{A}_G=\left\{D,\ L,\ OD,\ ODD,\ OT,\ OTD,\ H\right\}$. 

Let us assume that at both states $(b+(1-\tau)\mathcal{E}_T,\ p,\ n)$ and $(b,\ p,\ n)$ it is optimal to choose action $D$. We have
\begin{align}
&V(b+(1-\tau)\mathcal{E}_T,\ p,\ n)-V(b,\ p,\ n)\nonumber\\
&=V_{D}(b+(1-\tau)\mathcal{E}_T,\ p,\ n)-V_{D}(b,\ p,\ n)\nonumber\\
&=\beta\sum^{M-1}_{m=0}q_m V(\min(b+(1-\tau)\mathcal{E}_T+m,B_{max}),J(p),\ n-1)\nonumber\\
&-\beta\sum^{M-1}_{m=0}q_m V(\min(b+m,B_{max}),J(p),\ n-1)\nonumber\\
&<\beta \sum^{M-1}_{m=0}q_m (1-\tau)R_2 = \beta (1-\tau)R_2<(1-\tau)R_2.\label{D_bound}
\end{align}
Let us assume that at states $(b+(1-\tau)\mathcal{E}_T,\ p,\ n)$ and $(b,\ p,\ n)$ it is optimal to choose the action $L$. We have
\begin{align}
&V(b+(1-\tau)\mathcal{E}_T,\ p,\ n)-V(b,\ p,\ n)\nonumber\\
&=V_{L}(b+(1-\tau)\mathcal{E}_T,\ p,\ n)-V_{L}(b,\ p,\ n)\nonumber\\
&=\beta\sum^{M-1}_{m=0}q_m V(\min(b+(1-\tau)\mathcal{E}_T+m-\mathcal{E}_T,B_{max}),J(p),\ n-1)\nonumber\\
&-\beta\sum^{M-1}_{m=0}q_m V(\min(b+m-\mathcal{E}_T,B_{max}),J(p),\ n-1)\nonumber\\
&<\beta \sum^{M-1}_{m=0}q_m (1-\tau)R_2 = \beta (1-\tau)R_2<(1-\tau)R_2.\label{L_bound}
\end{align}
Similarly, it follows that  $V_{A}(b+(1-\tau)\mathcal{E}_T,\ p,\ n)-V_{A}(b,\ p,\ n)\leq (1-\tau)R_2$ for $A\in \left\{ OD,\ ODD,\ OT,\ OTD,\ H\right\}$. 

Next, we consider cases when different actions are optimal for the two state. First we assume that it is optimal to choose action $D$ at state $(b+(1-\tau)\mathcal{E}_T,\ p,\ n)$, and  action $L$ at state $(b,\ p,\ n)$. We can write
\begin{align}
&V(b+(1-\tau)\mathcal{E}_T,\ p,\ n)-V(b,\ p,\ n)\nonumber\\
&=V_{D}(b+(1-\tau)\mathcal{E}_T,\ p,\ n)-V_{L}(b,\ p,\ n)\nonumber\\
&= V_{D}(b+(1-\tau)\mathcal{E}_T,\ p,\ n)-V_{D}(b,\ p,\ n)\nonumber\\
&+ V_{D}(b,\ p,\ n) - V_{L}(b,\ p,\ n)\nonumber\\
&<(1-\tau)R_2 + 0 = (1-\tau)R_2\label{dif_action},
\end{align}
where (\ref{dif_action}) follows since $L$ is the optimal action at state $(b,\ p,\ n)$; and hence, $V_{D}(b,\ p,\ n) - V_{L}(b,\ p,\ n)\leq 0$. Also,  $V_{D}(b+(1-\tau)\mathcal{E}_T,\ p,\ n)-V_{D}(b,\ p,\ n)<(1-\tau)R_2$ as we have shown in (\ref{D_bound}). 

Similar to the derivations of (\ref{dif_action}), we can easily prove that $V_{A_1}(b+(1-\tau)\mathcal{E}_T,\ p,\ n)-V_{A_2}(b,\ p,\ n)\leq (1-\tau)R_2$ for $A_1\in \mathcal{A}_G$ and $A_2\in \left\{\mathcal{A}_G\backslash A_1\right\}$.

Combining all the above results, we can finally state that $V(b+(1-\tau)\mathcal{E}_T,\ p,\ n)-V(b,\ p,\ n)<(1-\tau)R_2$. Since $V(b,\ p,\ n)\rightarrow V(b,\ p)$ as $n\rightarrow\infty$, we have $V(b+(1-\tau)\mathcal{E}_T,\ p)-V(b,\ p)<(1-\tau)R_2$.
\end{proof}
In the following, we will show that $V_{OD}(b,\ p)>V_{ODD}(b,\ p)$. We have
\begin{subequations}
\begin{align}
&V_{OD}(b,\ p)-V_{ODD}(b,\ p) = p(1-\tau)R_2\nonumber\\
&+ p\beta \sum^{M-1}_{m=0}q_m \big[V(\min(b+m-\mathcal{E}_T,B_{max}),\lambda_1)\nonumber\\
&\hspace{1.9cm}-V(\min(b+m-\tau\mathcal{E}_T,B_{max}),\lambda_1) \big]\label{son1}\\
&>p(1-\tau)R_2-p\beta \sum^{M-1}_{m=0}q_m(1-\tau)R_2 =p(1-\beta)(1-\tau)R_2>0,\label{son2}
\end{align}
\end{subequations}
where we use the result established in Lemma \ref{thm:value_bound} to simplify (\ref{son1}) into (\ref{son2}). With the same outline in the above, it directly follows that $V_{OT}(b,\ p)>V_{OTD}(b,\ p)$. The intuition behind the above result is the fact that by saving $(1-\tau)\mathcal{E}_T$ units of energy in the GOOD state, one cannot get a better reward than $(1-\tau)R_2$ in the future.  Hence, there is no reason to save the energy when we are sure that the channel is in a GOOD state.

\end{appendices}

\bibliographystyle{unsrt} 

\bibliography{Bibliography}

\begin{IEEEbiography}[{\includegraphics[width=1in,height=1.25in,clip,keepaspectratio]{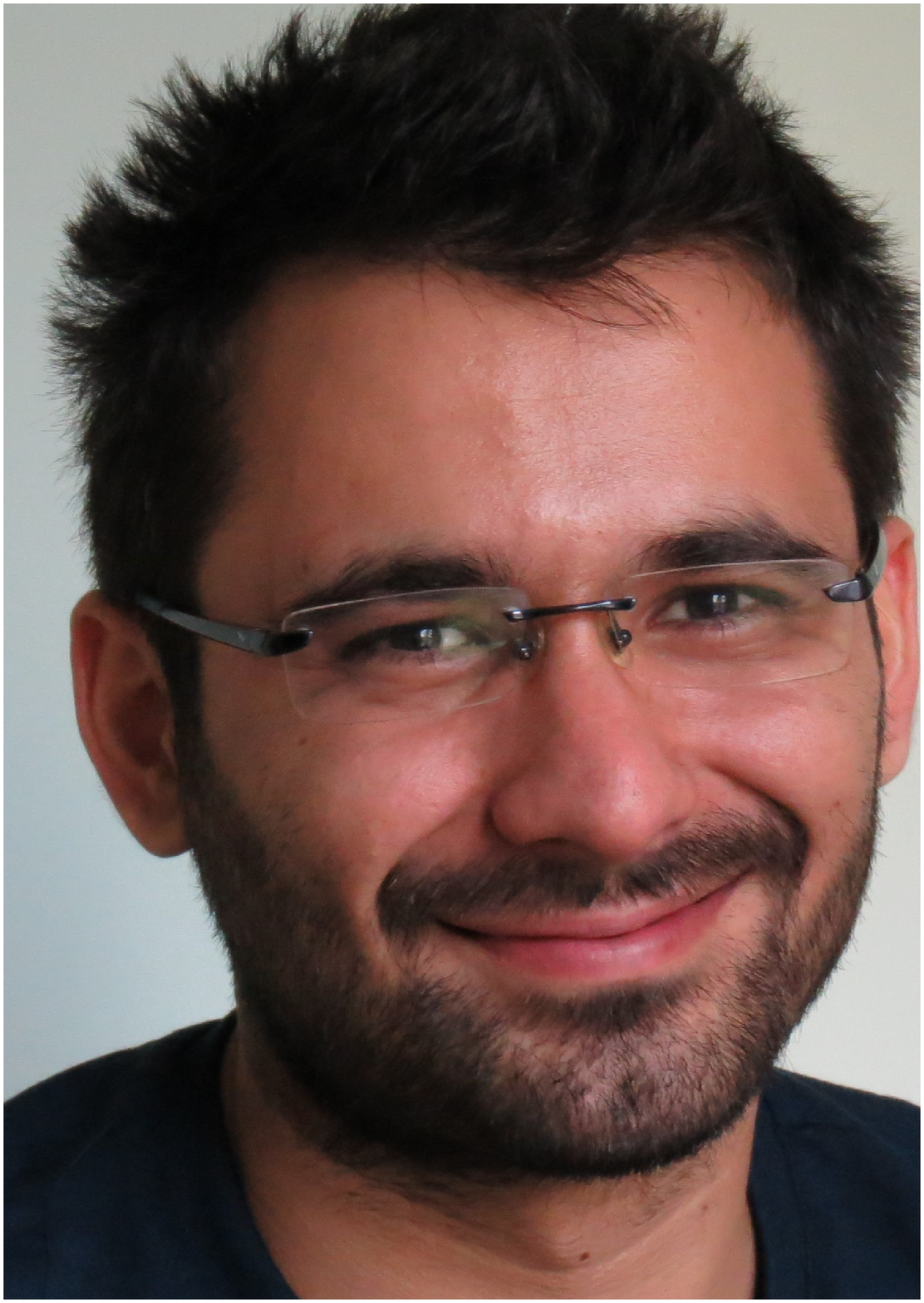}}]{Mehdi Salehi Heydar Abad}
received the B.S. degree in electrical engineering from the IUST, Tehran, Iran, in 2012, and the M.S. degree in electrical and electronics engineering from the Sabanci University, Istanbul, Turkey in 2015. Currently he is a PhD candidate at Sabanci University. He was also a Visiting Researcher at The Ohio State University, Columbus, OH, USA. His research interests are in the field of mathematical modeling of communication systems, stochastic optimization, and green communication networks. He is the recipient of the Best Paper Award at the 2016 IEEE Wireless Communications and Networking Conference (WCNC).
\end{IEEEbiography}

\begin{IEEEbiography}[{\includegraphics[width=1in,height=1.25in,clip,keepaspectratio]{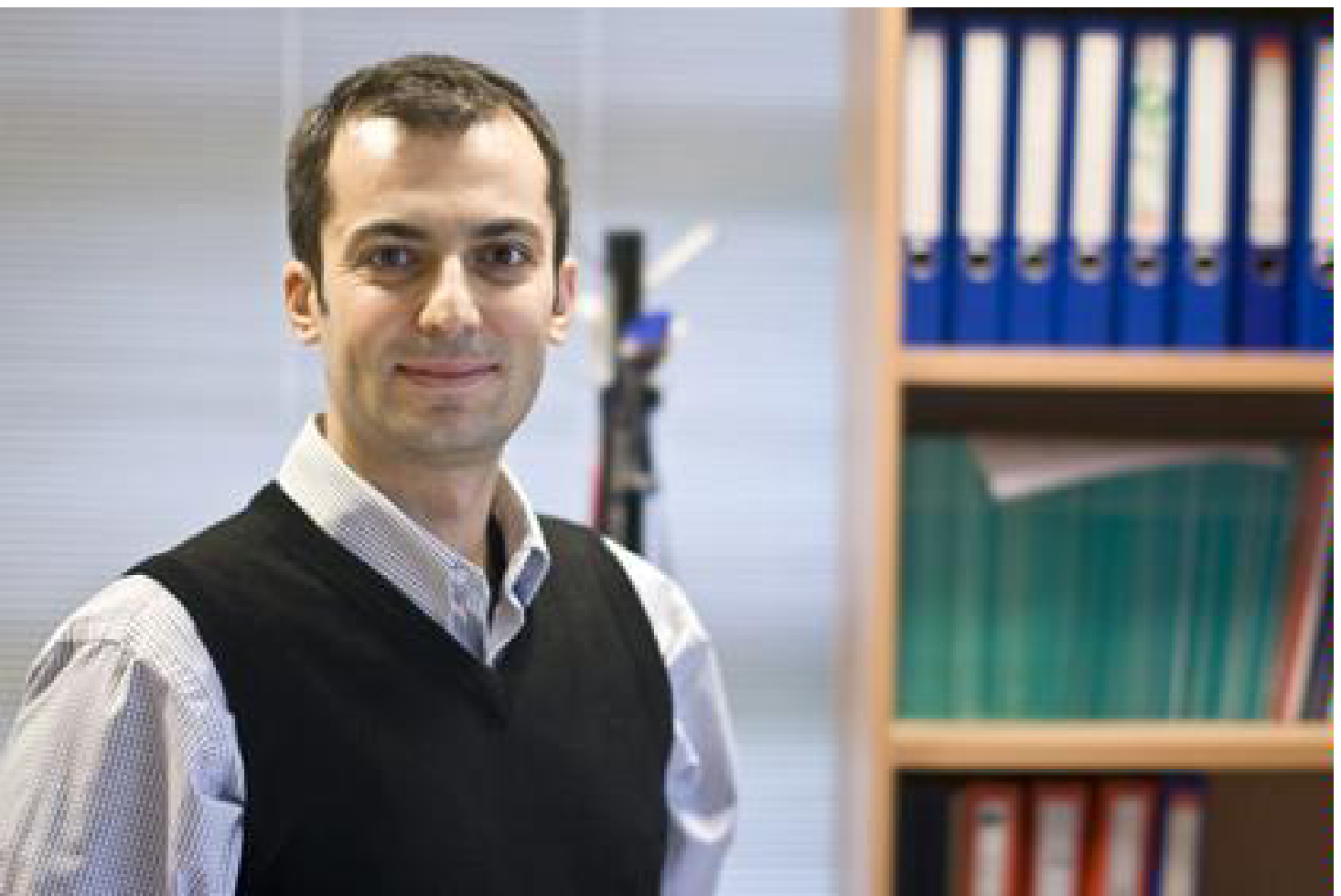}}]{Ozgur Ercetin}
received the B.S. degree in electrical and electronics engineering from the Middle East Technical University, Ankara, Turkey, in 1995, and the M.S. and Ph.D. degrees in electrical engineering from the University of Maryland, College Park, MD, USA, in 1998 and 2002, respectively. Since 2002, he has been with the Faculty of Engineering and Natural Sciences, Sabanci University, Istanbul, Turkey. He was also a Visiting Researcher with HRL Labs, Malibu, CA, USA; Docomo USA Labs, Palo Alto, CA, USA; The Ohio State University, Columbus, OH, USA; Carleton University, Ottawa, CA and Université du Québec à Montréal, Montreal CA. His research interests are in the field of computer and communication networks with emphasis on fundamental mathematical models, architectures and protocols of wireless systems, and stochastic optimization.
\end{IEEEbiography}

\begin{IEEEbiography}[{\includegraphics[width=1.5in,height=1.25in,clip,keepaspectratio]{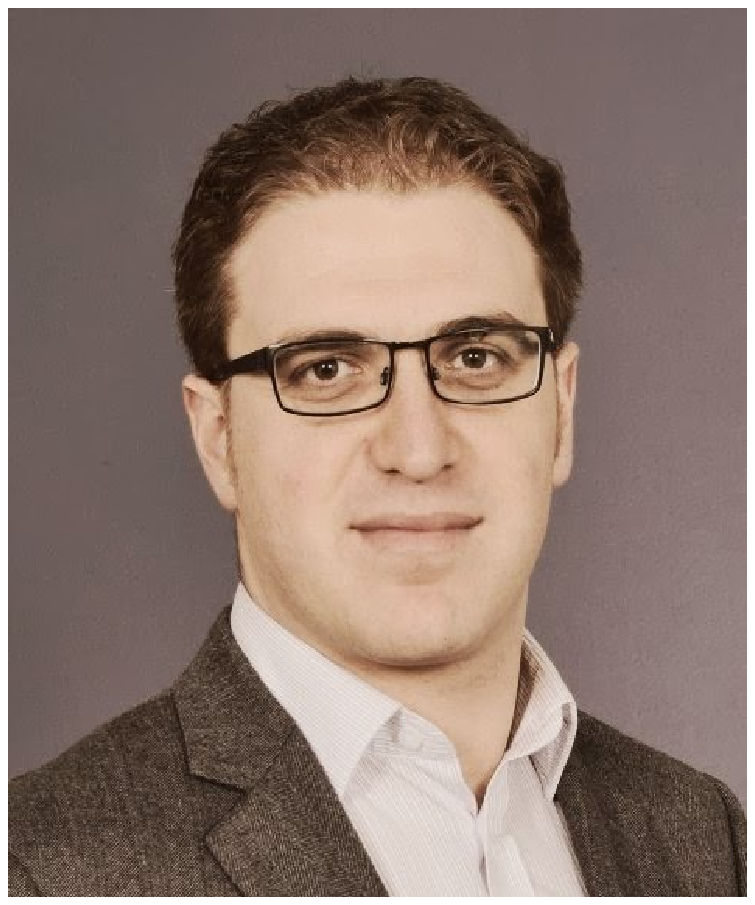}}]{Deniz G\"{u}nd\"{u}z}
[S’03-M’08-SM’13] received the B.S. degree in electrical and electronics engineering from METU, Turkey in 2002, and the M.S. and Ph.D. degrees in electrical engineering from NYU Polytechnic School of Engineering (formerly Polytechnic University) in 2004 and 2007, respectively. After his PhD, he served as a postdoctoral research associate at Princeton University, and as a consulting assistant professor at Stanford University. He also served as a research associate at CTTC in Spain, until he joined the Electrical and Electronic Engineering Department of Imperial College London, UK, where he is currently a Reader in information theory and communications.
His research interests lie in the areas of communications and information theory, machine learning, and security and privacy in cyber-physical systems. Dr. G\"{u}nd\"{u}z is an Editor of the IEEE TRANSACTIONS ON COMMUNICATIONS, and the IEEE TRANSACTIONS ON GREEN COMMUNICATIONS AND NETWORKING. He is the recipient of the IEEE Communications Society - Communication Theory Technical Committee (CTTC) Early Achievement Award in 2017, Starting Grant of the European Research Council (ERC) in 2016, and the IEEE Communications Society Best Young Researcher Award for the Europe, Middle East, and Africa Region in 2014. He is currently serving as the General Co-chair of the 2018 Workshop on Smart Antennas.
\end{IEEEbiography}
\end{document}